\documentclass[runningheads]{llncs}

\usepackage{graphicx}
\usepackage[misc,geometry]{ifsym}
\usepackage{amsmath}
\usepackage{amssymb}
\usepackage{graphicx}
\usepackage{yhmath}
\usepackage{eqparbox}
\usepackage{hyperref}
\usepackage{enumerate}
\usepackage[shortlabels]{enumitem}
\usepackage{tikz}
\usepackage{todonotes}
\usepackage{comment}
\usepackage[T2A]{fontenc}
\usepackage[utf8]{inputenc}

\def\A{\mathcal{A}}
\def\B{\mathcal{B}}
\def\E{\mathcal{E}}

\def\G{\mathcal{G}}
\def\LL{\mathcal{L}}
\def\N{\mathbb{N}}
\def\R{\mathcal{R}}
\def\Z{\mathbb{Z}}

\def\ww{\textbf{w}}
\newcommand{\bwt}[2]{{\rm bwt}_{#1} \left( #2 \right)}
\newcommand{\ebwt}[2]{{\rm ebwt}_{#1} \left( #2 \right)}

\newcommand*{\eqmathbox}[2][M]{\eqmakebox[#1]{$\displaystyle#2$}}

\date{}
\title{Clustering of Return Words in Languages of Interval Exchanges} 

\author{Francesco Dolce\inst{1}
	\and
	Christian B. Hughes\inst{1}
	}

\institute{Czech Technical University in Prague (Czech Republic) \\
	\email{\{dolcefra, hughechr\}@fit.cvut.cz}}

\begin{document}

	\maketitle
	
	\begin{abstract}
		A word over an ordered alphabet is said to be \emph{clustering} if identical letters appear adjacently in its Burrows-Wheeler transform.
		Such words are strictly related to (discrete) interval exchange transformations.
		We use an extended version of the well-known Rauzy induction to show that every return word in the language generated by a regular interval exchange transformation is clustering, partially answering a question of Lapointe (2021).
	\end{abstract}
	
	\keywords{Interval exchange transformations \and Burrows-Wheeler transform \and Clustering words \and Return words}

	\section{Introduction}
	\label{sec:intro}
	
	Interval exchange transformations (IETs), first introduced by Oseledec~\cite{Oseledec66} in 1966, are defined by first partitioning an interval into subintervals, then translating each subinterval by a fixed permutation.
	They form an important class of dynamical systems that are studied from different perspectives: symbolic dynamics, combinatorics on words, ergodic theory, and others.
	A rich body of work has since explored various structural and combinatorial properties of these transformations.
	One can code IETs in a natural way to obtain sequences of linear complexity, including Sturmian sequences, which have been widely studied (see, e.g.,\cite{bifixcodesiets,FerencziZamboni08,KanelBelovChernyatev10}).
	
	The Burrows-Wheeler transform, introduced in~\cite{BurrowsWheeler94}, is a transformation used in data compression that first rearranges the letters of a word by lexicographically sorting all of its conjugates, then reads in this order the last letters of these conjugates.
	Clustering words are words whose Burrows-Wheeler transform consists of adjacent occurrences of identical letters.
	A link between clustering words and IETs has been developed in recent years (e.g.,~\cite{FerencziHubertZamboni23,lapointe2021perfectly}).
	In particular, each clustering word can be associated with a discrete interval exchange  transformation (see~\cite{FerencziZamboni13}).
	
	Return words to $w$ in a language are words that when preceded by $w$ are still in the language and end with $w$ as well (see precise definition later).
	In a 2021 paper~\cite{lapointe2021perfectly}, Lapointe asked whether return words of a symmetric IET are themselves perfectly clustering.
	That is, if such return words cluster in a way that corresponds to the symmetric permutation.
	
	In this paper, spurred by Lapointe's 2021 question, we show that all return words of interval exchange transformations satisfying the Keane condition~\cite{Keane75}, i.e, regular IETs, are clustering.
	Our result leverages and extends previous combinatorial and dynamical insights, particularly from a work by the first author and Perrin on a two-sided version of Rauzy induction on regular IETs~\cite{branching}.
	The main result of this contribution is the following.
	
	\begin{theorem}
		\label{thm:main}
		Return words in a language generated by a regular interval exchange transformation are clustering words.
	\end{theorem}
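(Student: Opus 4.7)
The plan is to reduce the problem to a statement about an induced interval exchange transformation via the extended Rauzy induction of~\cite{branching}, and then exploit the known correspondence between (discrete) IETs and clustering words from~\cite{FerencziZamboni13}.

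First I would set up the geometric picture. Let $T$ be a regular IET on $[0,1)$ with intervals $I_1,\dots,I_k$ permuted by some $\sigma$, and let $\LL$ be its language. For a factor $w\in\LL$, let $J_w\subseteq[0,1)$ be the cylinder of points whose coding starts with $w$. Because $T$ satisfies the Keane condition, the orbits of the discontinuity points are all infinite and distinct, so $J_w$ is a half-open interval and the first-return map of $T$ to $J_w$ is itself a regular IET. The intervals of this induced IET $T_w$ are in bijection with the return words to $w$ in $\LL$: each interval $R_u\subseteq J_w$ collects the points whose first-return trajectory spells exactly the return word $u$.

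Next I would apply the two-sided Rauzy induction of~\cite{branching} to trace how the combinatorial data of $T_w$ (ordering of its intervals, permutation) is built from the original IET $T$ together with the word $w$. The output of this step is a concrete description, for each return word $u$, of the subintervals of $[0,1)$ visited by an orbit starting in $R_u$: reading them off in the order prescribed by the iterations of $T$ recovers $u$ itself, while reading them off in the order prescribed by their positions on $[0,1)$ gives the lexicographically sorted list of the cyclic conjugates of $u$. This is exactly the configuration that realizes a discrete IET, in the sense of~\cite{FerencziZamboni13}, whose permutation is induced by $\sigma$ through the Rauzy induction.

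Finally I would read off the clustering property. The Burrows--Wheeler transform of $u$ records the last letter of each conjugate in lexicographic order; by the previous step, this last letter is determined by which $I_i$ the corresponding orbit point lies in, and points in the same $I_i$ form a contiguous subinterval of $J_w$. Hence after sorting, identical letters group together, which is exactly the clustering property.

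The main obstacle I expect is the compatibility statement in the second step: one must show that the lex order on cyclic conjugates of the return word $u$ coincides with the linear order of the corresponding subintervals of $J_w$, the comparison being controlled by $\sigma$. For plain rotations this is essentially obvious, but for a general regular IET it requires the extra bookkeeping that the two-sided version of Rauzy induction provides, since we need to track not only how an interval is cut up by future iterations of $T$ but also how it is approached from the past, i.e. the bilateral context in which each conjugate of $u$ sits. Once this compatibility is established, clustering follows immediately and the theorem is proved.
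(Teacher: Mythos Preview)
Your route is not the paper's. The paper never tries to match the lexicographic order on the conjugates of a return word with the spatial order of an orbit segment. Instead, it uses Proposition~\ref{pro:Iw} only to write each return word as $\theta_1\circ\cdots\circ\theta_n(a)$ for a single letter $a$, where each $\theta_i$ is one of the elementary morphisms $\alpha_{a,b}$ or $\tilde{\alpha}_{a,b}$ dictated by a single Rauzy step. The whole weight of the argument is then carried by Lemma~\ref{lem:clustering}, a purely combinatorial statement that each such morphism, under the order/permutation constraints forced by Lemmata~\ref{lem:rho} and~\ref{lem:lambda}, sends $\pi$-clustering words to $\pi'$-clustering words (over a possibly reordered alphabet). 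Since letters are trivially clustering, induction on $n$ finishes the proof. No compatibility between conjugate order and interval order is ever needed.

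Your sketch, by contrast, leaves exactly the hard step open. For $x\in R_u$ and $|u|=n$, the spatial order of $T^i(x)$ does coincide with the lex order of the \emph{infinite} trajectories $\Omega_T(T^i(x))$, because $T$ is order-preserving on each $I_a$. But the cyclic conjugates of $u$ are not these trajectories: once an index wraps past $n$, the conjugate continues with $u_0u_1\cdots$ while the true trajectory continues with the coding of $T^n(x),T^{n+1}(x),\dots$, and these differ in general. Likewise, the ``last letter'' of the conjugate starting at $i$ is coded by $T^{i-1\bmod n}(x)$, so for $i=0$ you need $T^{n-1}(x)$ rather than $T^{-1}(x)$; there is no reason these lie in the same $I_a$. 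This wrap-around is precisely your ``compatibility'' obstacle, and saying that the two-sided Rauzy induction will provide the bookkeeping is not yet an argument: in the paper the induction is used to factor the return word through clustering-preserving morphisms, not to certify any order-matching for a single long orbit. If you want to push your geometric strategy through, you must produce an actual DIET (with explicit lengths and permutation) whose periodic coding is $u$; the characterization of~\cite{FerencziZamboni13} requires a \emph{closed} orbit, and the open orbit segment inside a regular $T$ does not give you one for free.
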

	
	Our approach to this theorem relies on Rauzy induction, a dynamical tool introduced in its one-sided version by Rauzy~\cite{Rauzy79} and subsequently extended in various ways.
	We consider a family of morphisms that, under certain assumptions, preserves clustering at every step of the induction.
	This result, together with the fact that one can obtain the cylinders of a regular IET through Rauzy induction, allows us to prove the theorem.
	
	We conclude this contribution by extending the link between IETs (resp. DIETs) and clustering words to clustering multisets of words instead.
	In order to do so, we introduce the notion of alsinicity, a generalization of the more well-known concept of dendricity.

	\section{Preliminaries}
	\label{sec:preliminaries}
	
	For all undefined terms, we refer the reader to~\cite{Lothaire2}.
	
	\paragraph{\bf Words.}
	
	An \emph{ordered alphabet} $\A = \{ a_1 < a_2 < \ldots < a_d \}$ is a (finite) set of symbols called \emph{letters} together with an order of its elements.
	The set of \emph{finite words} $\A^*$ over $\A$ is the free monoid with neutral element the \emph{empty word} $\varepsilon$.
	The product of two words $u,v \in \A^*$ is given by their composition $uv$.
	We denote by $\A^+$ the free semigroup over $\A$, e.g., $\A^+ = \A^* \setminus \{ \varepsilon \}$.
	The order on $\A$ is naturally extended to $\A^*$ by the lexicographic order.
	For a given word $w = w_0 w_1 \cdots w_{n-1}$, where each $w_i \in \A$, we denote by $|w|$ its length $n$, and by $|w|_u$ the number of times $u$ appears a factor of $w$.
	The Parikh vector of a word $w \in \A^*$ is the vector $\Psi_{\A}(w) \in \N^d$ defined as $(\Psi_{\A}(w))_a = |w|_a$.
	A word $w \in \A^*$ is \emph{pangrammatic} if $|w|_a > 0$ for every $a \in \A$.
	Unless stated otherwise we will always consider words pangrammatic over their alphabet.
	These vectors can be generalized in a natural way to multisets of words over the same ordered alphabet.
	
	A word $w$ is said to be \emph{primitive} if it is not the integer power of another word, i.e., if $w = u^k$ implies $k=1$.
	Two words $w, w'$ are \emph{conjugate} if $w = uv$ and $w'=vu$ for some $u,v \in \A^*$.
	If a word $w$ is primitive, then it has exactly $|w|$ distinct conjugates.
	A \emph{Lyndon word} is a primitive word that is minimal for the lexicographic order among its conjugates.
	
	A (right) \emph{infinite word} over $\A$ is a sequence $\ww = w_0 w_1 w_2 \cdots$, with $w_i \in \A$ for all $i$.
	An infinite word $\ww$ is \emph{eventually periodic} if $\ww = u v^\omega = u v v v \cdots$.
	An infinite word that is not eventually periodic is called \emph{aperiodic}.

	\paragraph{\bf Languages.}
	
	By \emph{language} we mean a factorial and bi-extendable set $\LL \subset \A^*$, i.e., such that, for every $w \in \A^*$, we have $v, aw, wb \in \LL$ for every factor $v$ of $u$ and for certain letters $a,b \in \A$.
	The language of an infinite word $\ww$ is the set $\LL(\ww)$ of all its factors, while the language of a finite word $w$ is defined as $\LL(w^\omega)$.
	A language $\LL$ is \emph{recurrent} if, for every $v \in \LL$, $vuv \in \LL$ for a certain word $u$.
	It is \emph{uniformly recurrent} if for every $v \in \LL$, there exists $N \in \N$ such that $v$ appears as a factor of every element of length $N$ in $\LL$.
	The set $\R_{\LL}(w)$ of (right) \emph{return words} to $w$ in $\LL \subset \A^*$
	is the set of words $u$ such that $wu \in \LL$ has exactly two occurrences of $w$ as factor: as a prefix and as a suffix.
	Formally $\R_{\LL}(w) = \{ u \in \A^* \, | \, wu \in (\LL \cap \A^* w) \setminus \A^+ w \A^+ \}$.
	When the language $\LL$ is clear, we will simply write $\R(w)$.

	\paragraph{\bf Morphisms.}
	
	A \emph{morphism} is a map $\varphi: \A^* \to \B^*$, with $\A, \B$ alphabets, such that $\varphi(\varepsilon) = \varepsilon$ and $\varphi(uv) = \varphi(u) \varphi(v)$ for every $u,v \in \A^*$.
	Given two distinct letters $a,b \in \A$, let us define the morphisms $\alpha_{a,b}, \tilde{\alpha}_{a,b} :\A^* \to \A^*$ as
	$$
	\alpha_{a,b} =
	\left\{
	\begin{array}{ll}
		a \mapsto a b & \\
		c \mapsto c, & \quad c \neq a
	\end{array}
	\right.
	\qquad
	\rm{and}
	\qquad
	\tilde{\alpha}_{a,b} =
	\left\{
	\begin{array}{ll}
		a \mapsto b a & \\
		c \mapsto c, & \quad c \neq a
	\end{array}
	\right..
	$$

	\paragraph{\bf Permutations.}
	
	When the letters of the alphabet are indexed $\{ a_1 < \ldots < a_d \}$, we identify $S_\A$ with $S_d$ and write $a_{\pi(d)}$ instead of $\pi(a_d)$.
	To describe a permutation $\pi \in S_{\A}$, we will use either the one-line notation or the cyclic one.
	For instance, the symmetric permutation defined by
	$a_{\pi(i)} = a_{d-i+1}$ for every $1 \le i \le d$
	will be denoted as either $(a_d, a_{d-1}, \ldots, a_1)$ or as the composition of the $2$-cycles $(a_1 a_d) (a_2 a_{d-2}) \cdots (a_{\frac{d}{2}} a_{\frac{d}{2}+1})$ if $d$ is even (if $d$ is odd, the last cycle is replaced by $(a_{\frac{d+1}{2}})$).
	A permutation is \emph{circular} if it has only one cycle.
	It is \emph{reducible} if $\{ a_1 < \cdots < a_k \}$ is invariant under $\pi$ for every $1 \le k <d$.

	\paragraph{\bf Burrows-Wheeler Transform.}
	
	The Burrows-Wheeler transform of a word $w \in \A^*$ is the word $\bwt{\A}{w}$ obtained by concatenating the last (not necessarily distinct) letters of the $|w|$ conjugates of $w$, sorted lexicographically on $\A$.
	
	\begin{example}
		\label{ex:sphynx}
		Consider the word $w = {\tt sphynx}$ on the standard ordered English alphabet $\E = \{ {\tt a} < {\tt b} < \ldots < {\tt z} \}$.
		Then $\bwt{\E}{w} = {\tt pysxnh}$.
	\end{example}
	
	The following results are well known (see, e.g.,\cite{ChrochemoreDesarmenienPerrin05,MantaciRestivoRosoneSciortino07,MantaciRestivoSciortino03}).
	
	\begin{proposition}
		\label{pro:bwt-conjugates}
		Two words $u,v$ over the same ordered alphabet $\A$ are conjugate if and only if $\bwt{\A}{u} = \bwt{\A}{v}$ 
	\end{proposition}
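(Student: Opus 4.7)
The plan is to prove the two implications separately. For the forward direction, I would argue directly from the definitions: if $u$ and $v$ are conjugate, there exist $x, y \in \A^*$ with $u = xy$ and $v = yx$, so the multiset of cyclic rotations of $u$ coincides with that of $v$. Sorting this common multiset lexicographically yields the same ordered sequence of conjugates in both cases, and reading the last letter of each produces the same word; hence $\bwt{\A}{u} = \bwt{\A}{v}$.

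For the backward direction, assume $\bwt{\A}{u} = \bwt{\A}{v}$ and let $n = |u| = |v|$ (equality of lengths is immediate, since the BWT is obtained by permuting the letters of its input). The key claim is that $\bwt{\A}{w}$ determines the multiset of conjugates of $w$; once this is established, $u$ and $v$ share their multiset of conjugates, and in particular $u$ is a conjugate of $v$. To prove the claim I would introduce the $n \times n$ matrix $M_w$ whose rows are the lexicographically sorted conjugates of $w$. Its last column is $\bwt{\A}{w}$ by definition, while its first column is obtained simply by sorting $\bwt{\A}{w}$. The crucial ingredient is the \emph{LF-mapping}: for every letter $a \in \A$, the $j$-th row of $M_w$ ending in $a$ and the $j$-th row of $M_w$ beginning with $a$ correspond to cyclically consecutive positions in $w$. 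Iterating this correspondence recovers each row of $M_w$ from $\bwt{\A}{w}$ alone, hence the entire matrix.

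The main obstacle I anticipate is the rigorous justification of the LF-property. One has to exploit the stability of the lexicographic sort: among the rows of $M_w$ beginning with a common letter $a$, the relative order is governed by the suffix appearing after that $a$, and these suffixes are precisely the sorted list of the conjugates of $w$ viewed from positions following an occurrence of $a$. Comparing this with the induced order on the rows whose last letter equals $a$ gives the LF-correspondence. Once this is in place, the inverse BWT procedure reconstructs $M_w$ unambiguously from $\bwt{\A}{w}$, so the BWT determines the multiset of conjugates and the proposition follows.
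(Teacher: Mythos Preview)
Your argument is correct and is precisely the classical proof of this fact via the inverse Burrows--Wheeler transform (the LF-mapping). There is nothing to compare against in the paper itself: the proposition is stated there as a well-known result with references to \cite{ChrochemoreDesarmenienPerrin05,MantaciRestivoRosoneSciortino07,MantaciRestivoSciortino03}, and no proof is given. Your sketch matches the arguments found in those references; the only point worth tightening in a full write-up is the non-primitive case, where the LF-permutation decomposes into several cycles of equal length rather than a single $n$-cycle, but the reconstruction of each row of $M_w$ by iterating LF still goes through verbatim.
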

	
	\begin{proposition}
		\label{pro:bwt-primitive}
		Let $u \in \A^*$.
		A word $w$ is a conjugate of $u^p$ if and only if
		$\bwt{\A}{u} = b_1 \cdots b_{|u|}$ and
		$\bwt{\A}{w} = b_1^p \cdots b_{|u|}^p$, with $b_i \in \A$.
	\end{proposition}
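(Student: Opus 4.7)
The plan is to combine Proposition~\ref{pro:bwt-conjugates} with a direct computation of $\bwt{\A}{u^p}$. By Proposition~\ref{pro:bwt-conjugates}, the word $w$ is conjugate to $u^p$ if and only if $\bwt{\A}{w} = \bwt{\A}{u^p}$, so the statement reduces to establishing the identity $\bwt{\A}{u^p} = b_1^p \cdots b_{|u|}^p$ whenever $\bwt{\A}{u} = b_1 \cdots b_{|u|}$; the converse direction then follows at once.

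First I would enumerate the $p\cdot |u|$ cyclic rotations of $u^p$. Writing $u = u_0 u_1 \cdots u_{n-1}$ with $n = |u|$ and letting $c_i = u_i u_{i+1} \cdots u_{n-1} u_0 \cdots u_{i-1}$ denote the $i$-th cyclic rotation of $u$, one checks that because $u^p$ has period $n$, the rotation of $u^p$ starting at position $kn+i$ (for $0 \le k < p$, $0 \le i < n$) equals $c_i^p$. Hence the multiset of conjugates of $u^p$ consists of $c_0^p, c_1^p, \ldots, c_{n-1}^p$, each appearing with multiplicity $p$, and the last letter of $c_i^p$ coincides with the last letter of $c_i$.

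Next I would compare lexicographic orders: for $i \neq j$, two length-$n$ strings $c_i$ and $c_j$ either coincide or first differ at some position $k < n$; in the latter case $c_i^p$ and $c_j^p$ also first differ at position $k$ with the same comparison, so $c_i^p \le c_j^p$ if and only if $c_i \le c_j$. Consequently, after lexicographically sorting the multiset of conjugates of $u^p$, the $p$ identical copies of each $c_i^p$ cluster together, and the blocks appear in the same order as the sorted conjugates of $u$. Reading off the last letters yields $b_1^p b_2^p \cdots b_n^p$, as claimed.

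The argument is essentially bookkeeping and I do not anticipate a genuine obstacle. The only mild subtlety worth flagging is that for $p > 1$ the word $u^p$ is never primitive, so its multiset of cyclic rotations has repetitions; the proof above handles this automatically by carrying multiplicities throughout, and therefore does not require $u$ itself to be primitive.
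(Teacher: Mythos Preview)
Your argument is correct. The paper does not actually prove Proposition~\ref{pro:bwt-primitive}; it is stated together with Proposition~\ref{pro:bwt-conjugates} as a well-known fact with references to the literature, so there is no in-paper proof to compare against. Your reduction via Proposition~\ref{pro:bwt-conjugates} to the identity $\bwt{\A}{u^p}=b_1^p\cdots b_{|u|}^p$, followed by the observation that the multiset of conjugates of $u^p$ is $\{c_i^p : 0\le i<|u|\}$ each with multiplicity $p$, that $c_i^p$ and $c_j^p$ compare lexicographically exactly as $c_i$ and $c_j$ do, and that the last letter of $c_i^p$ equals that of $c_i$, is the standard direct verification and is complete as written. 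Your remark that primitivity of $u$ is not needed, because multiplicities are carried throughout, is also correct.
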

	
	In~\cite{MantaciRestivoRosoneSciortino07} it is shown that, given an ordered alphabet $\A$, an extended version of the Burrows-Wheeler transform, denoted $\textrm{ebwt}$, gives a bijection between $\A^*$ and the multiset of Lyndon words over $\A$, where the conjugates, possibly of different length, are ordered using the $\omega$-order instead of the lexicographic one: $u \le_{\omega} v$ if $u^\omega \le v^\omega$.
	
	\begin{example}
		\label{ex:ebwt}
		Let $W$ be the multiset $\{ {\tt aac}, {\tt ab}, {\tt ab} \}$ of Lyndon words over $\A = \{ {\tt a} < {\tt b} < {\tt c} \}$.
		We have $\Psi_{\A}(W) = (4,2,1)$ and $\ebwt{\A}{W} = {\tt c bb aaaa}$.
	\end{example}
	
	Let $\pi$ be a permutation on $\A$.
	A word $w \in \A^{*}$ is said to be $\pi$\emph{-clustering} for $\A$ if
	$\bwt{\A}{w} = a^{k_1}_{a_{\pi(1)}} ... a^{k_r}_{a_{\pi(d)}}$, where $k_i = |w|_{a_{\pi(i)}}$.
	It is \emph{perfectly clustering} (for its alphabet) when $\pi \in S_\A$ is symmetric.
	The notions of clustering and perfectly clustering can be extended to multisets of words.
	
	\begin{example}
		Let us consider the three alphabets
		$\A = \{ {\tt a} < {\tt b} < {\tt n} \}$,
		$\A' = \{ {\tt a} < {\tt n} < {\tt b} \}$, and
		$\A'' = \{ {\tt n} < {\tt a} < {\tt b} \}$.
		The word $w = {\tt banana}$ is defined over each of the three alphabets.
		One has
		$\bwt{\A}{w} = {\tt nnbaaa}$,
		$\bwt{\A'}{w} = {\tt bnnaaa}$ and
		$\bwt{\A''}{w} = {\tt aabnna}$.
		So $w$ is perfectly clustering for $\A$ and $\A'$, but not clustering for $\A''$.
	\end{example}
	
	Over a binary alphabet (perfectly) clustering words coincide with powers of Christoffel words and their conjugates (\cite{MantaciRestivoRosoneSciortino07}).
	A characterization over larger alphabets in terms of factorization into palindroms is
	given in~\cite{LapointeReutenauer24} (see also~\cite{SimpsonPuglisi08}).
	
	The following result easily follows from Propositions~\ref{pro:bwt-conjugates} and~\ref{pro:bwt-primitive}.
	
	\begin{proposition}
		\label{pro:clusteringprimitive}
		Let $w = u^p \in \A^*$ with $u$ primitive.
		Then $w$ is $\pi$-clustering for $\A$ if and only if $u$ is $\pi$-clustering for $\A$.
	\end{proposition}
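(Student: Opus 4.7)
The plan is to leverage Proposition~\ref{pro:bwt-primitive} as the sole non-trivial ingredient, since it already tells us exactly how the Burrows--Wheeler transform behaves under taking integer powers of a primitive word. The statement of that proposition, applied to $w = u^p$ (which is of course a conjugate of itself), gives us the identity
$$
\bwt{\A}{u} = b_1 b_2 \cdots b_{|u|}
\quad \Longleftrightarrow \quad
\bwt{\A}{w} = b_1^p b_2^p \cdots b_{|u|}^p,
$$
with $b_i \in \A$. The proof is then essentially a comparison of this block-of-$p$ structure against the clustering pattern.

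For the forward implication, I would assume $u$ is $\pi$-clustering, so that $\bwt{\A}{u} = a_{\pi(1)}^{\ell_1} \cdots a_{\pi(d)}^{\ell_d}$ with $\ell_i = |u|_{a_{\pi(i)}}$. Inflating each letter by a factor of $p$ as dictated by Proposition~\ref{pro:bwt-primitive} yields $\bwt{\A}{w} = a_{\pi(1)}^{p\ell_1} \cdots a_{\pi(d)}^{p\ell_d}$, and since $|w|_{a_{\pi(i)}} = p\,\ell_i$, this is exactly the clustering pattern for $w$ with the same permutation $\pi$.

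For the reverse implication, I would start from $\bwt{\A}{w} = a_{\pi(1)}^{p\ell_1} \cdots a_{\pi(d)}^{p\ell_d}$, where $\ell_i = |u|_{a_{\pi(i)}}$, and compare it position-by-position with the expression $b_1^p \cdots b_{|u|}^p$ provided by Proposition~\ref{pro:bwt-primitive}. Since every $b_i^p$ is a run of $p$ identical letters, each maximal run $a_{\pi(j)}^{p\ell_j}$ decomposes uniquely into $\ell_j$ consecutive blocks of the form $b_i^p$, all with $b_i = a_{\pi(j)}$. Reading off the $b_i$'s in order therefore gives $\bwt{\A}{u} = a_{\pi(1)}^{\ell_1} \cdots a_{\pi(d)}^{\ell_d}$, and $u$ is $\pi$-clustering.

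There is no real obstacle here; the only point that needs mild care is the unique parsing of each maximal run in the reverse direction, which is immediate because the blocks $b_i^p$ all have the same length $p$ and sit in consecutive positions determined by $i$. Pangrammaticity of $u$ (our standing convention) ensures that all $\ell_i$ are strictly positive, so no empty blocks complicate the correspondence between the $\pi$-clustering patterns of $u$ and of $w$.
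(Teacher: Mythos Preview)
Your proposal is correct and follows precisely the route the paper indicates: the paper does not spell out a proof but simply states that the result ``easily follows from Propositions~\ref{pro:bwt-conjugates} and~\ref{pro:bwt-primitive}'', and your argument is exactly the unpacking of Proposition~\ref{pro:bwt-primitive} applied to $w=u^p$ (Proposition~\ref{pro:bwt-conjugates} is not really needed once $w$ is literally $u^p$). The block-by-block comparison you describe in both directions is the intended content.
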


	\section{Interval Exchanges}
	\label{sec:ie}
	
	By an interval, we mean a left-closed and right-open interval over the real line.
	Let $\A$ be an ordered alphabet of cardinality $d$ and $\pi$ a permutation over $\A$.
	An ordered partition $(I_a)_{a \in \A}$ of an interval $I$ is such that $I_a$ is to the left of $I_b$ when $a < b$.
	The $d$-\emph{interval exchange transformation} (or $d$-\emph{IET} or just \emph{IET} in short) $T$ associated with a partition $(I_a)_{a \in \A}$ and a permutation $\pi$ is the piecewise translation on $I = [\ell, r)$ defined by
	$T(x) = x + \tau_a$ if $x \in I_a$, where
	$\tau_a = \sum_{\pi^{-1}(b) < \pi^{-1}(a)} |I_b| - \sum_{b < a} |I_b|$.
	Let $D(T) = \{ \sum_{b<a} |I_b| \; | \; a \in \A \} \setminus \{ \ell \}$ denote the set of \emph{formal discontinuities} of $T$.
	
	The \emph{orbit} of a point $x \in I$ under $T$ is the set $\{T^k(x) \; | \; k \in \Z \}$.
	The IET is \emph{periodic} if the orbit of any point $x \in I$ is finite.
	It is \emph{minimal} if the orbit of any point is dense in $I$.
	In this case, the permutation $\pi$ is irreducible (see, e.g.,~\cite{branching}).
	An IET is \emph{regular} (or satisfies \emph{Keane condition} or \emph{i.d.o.c.}) if the orbits of the formal discontinuities are infinite and disjoint.
	A regular IET is minimal and aperiodic~\cite{Keane75}, while the inverse is not true (see, e.g.,~\cite{bifixcodesiets}).
	
	A \emph{connection} of an IET $T$ is a triple $(x,y,n)$ where $x\in D(T^{-1})$, $y \in D(T)$, $n \ge 0$ and $T^n(x) = y$.
	When $n=0$, we call $x=y$ a $0$-\emph{connection}.
	A regular IET has no connection.
	
	Given an IET $T$ on $I$, for each point $x \in I$, we can assign to it an infinite word $\Omega_T(x) =w_0 w_1 w_2 \cdots$ describing its orbit, setting $w_k = a$ if $T^k(x) \in I_a$.
	This word is called the \emph{trajectory} of $x$ under $T$.
	The \emph{language} of an IET $T$ is
	$\LL(T) = \bigcup_{x \in I} \LL(\Omega_T(x))$.
	When $T$ is minimal or has only one periodic component (i.e., we have only one possible trajectory up to a shift), $\LL(T)$ does not depend on the choice of $x$.
	Moreover, in this case $\LL(T)$ is uniformly recurrent (see, e.g.,~\cite{branching}) and thus recurrent.
	
	Given an IET $T$ and a word $w =w_0 w_1 \cdots w_{n-1} \in \LL(T)$, we define the interval
	$I_w = I_{w_0} \cap T^{-1}(I_{w_1}) \cap \dots \cap T^{-(n-1)}(I_{w_{n-1}})$.
	By convention $I_\varepsilon = [\ell, r)$.
	For every point $x \in I_w$, the trajectory $\Omega_T(x)$ has $w$ as a prefix.

	\section{Discrete Interval Exchanges}
	
	A \emph{discrete interval exchange} (or \emph{DIET} in short) associated with the composition $(n_1, n_2, \ldots, n_d)$ of $n = \sum_{i=1}^d n_i$ and permutation $\pi \in S_d$ is the map $T(k) = k + t_i$ if $\sum_{j < i} n_j < k \le \sum_{j \le i} n_j$, where $t_i = \sum_{\pi^{-1}(j) < \pi^{-1}(i)} n_j - \sum_{j < i} n_j$.
	A DIET corresponds to an IET associated with a partition $(I_{a})_{a \in \A}$ and $\pi$, where $\A = \{ a_1 < \ldots < a_d \}$ and $|I_{a_i}| = n_i$.
	Note that each component of this corresponding IET is periodic;
	thus, in particular, a DIET is never minimal (nor regular).
	
	There is a strong link between clustering multisets of primitive words and DIETs.
	In fact, if a multiset $W \subset \A^*$ is $\pi$-clustering, then its Parikh vector gives a composition of $n = \sum_{w \in W} |w|$ that, along with $\pi$, defines a DIET.
	Similarly to IETs, we can encode the (periodic) trajectories by encoding each integer
	$k \in \left[ \sum_{j<i} n_j, \, \sum_{j \le i} n_j \right]$
	by the
	$i^{\rm th}$
	letter of the alphabet.
	
	In a symmetric way, it is possible to show that every DIET corresponds to a unique multiset of Lyndon words, with each orbit associated to a (clustering) Lyndon word.

	\begin{example}
		\label{ex:ebwt-diet}
		Let $W$ be the multiset of Example~\ref{ex:ebwt}.
		We can define a DIET $T$ associated with the composition $(4,2,1)$ of $7$ and the permutation $\pi = ({\tt c}, {\tt b}, {\tt a})$.
		The action of the DIET over $\{ 1, 2, \ldots, 7 \}$ is given by $\mu = (1,4,7)(2,5)(3,6) \in S_7$ (see left of Figure~\ref{fig:7diet}).
		Each orbit corresponds to one of the primitive words in $W$.
		For instance, the trajectory of $4$ is given by $\Omega(4) = ({\tt aca)}^\omega$, the infinite repetition of a conjugate of ${\tt aac}$.
		One can check that
		$I_{\tt a} = \{ 1,2,3,4 \},
		I_{\tt ab} = \{ 2,3 \},
		I_{\tt aac} = \{ 1 \}
		$.
		The corresponding IET is shown on the right of Figure~\ref{fig:7diet}.
	\end{example}
	
	\begin{figure}[ht]
		\centering
		\begin{tikzpicture}[scale=0.5]
			\node (u1) {$1$};
			\node (u2) [right=0cm of u1] {$2$};
			\node (u3) [right=0cm of u2] {$3$};
			\node (u4) [right=0cm of u3] {$4$};
			\node (u5) [right=0cm of u4] {$5$};
			\node (u6) [right=0cm of u5] {$6$};
			\node (u7) [right=0cm of u6] {$7$};
			\node[draw,rounded corners,blue,minimum width=1.4cm, minimum height=0.4cm, right=-0.3cm of u1] (ua) {};
			\node[draw,rounded corners,red,minimum width=0.7cm, minimum height=0.4cm, right=0.1cm of ua] (ub) {};
			\node[draw,rounded corners,green,minimum width=0.3cm, minimum height=0.4cm, right=0.1cm of ub] (uc) {};
			
			\node (b1) [below=0.1cm of u1] {$1$};
			\node (b2) [right=0cm of b1] {$2$};
			\node (b3) [right=0cm of b2] {$3$};
			\node (b4) [right=0cm of b3] {$4$};
			\node (b5) [right=0cm of b4] {$5$};
			\node (b6) [right=0cm of b5] {$6$};
			\node (b7) [right=0cm of b6] {$7$};
			\node[draw,rounded corners,green,minimum width=0.2cm, minimum height=0.4cm, right=-0.3cm of b1] (bc) {};
			\node[draw,rounded corners,red,minimum width=0.7cm, minimum height=0.4cm, right=0.1cm of bc] (bb) {};
			\node[draw,rounded corners,blue,minimum width=1.5cm, minimum height=0.4cm, right=0.1cm of bb] (ba) {};

			\draw[->, thick] (-.4, 0) .. controls (-.9, -0.1) and (-.9, -0.9) .. (-.4, -1) node[midway, left] {$T$};
		\end{tikzpicture}
		\qquad
		\begin{tikzpicture}[scale=0.5][x=5cm, y=1cm]
			\draw[thick,blue] (0,0.5) -- (4,0.5) node[midway,above] {${\tt a}$};
			\draw[thick,red] (4,0.5) -- (6,0.5) node[midway,above] {${\tt b}$};
			\draw[thick,green] (6,0.5) -- (7,0.5)  node[midway,above] {${\tt c}$};
			
			\draw[thick,green] (0,-0.7) -- (1,-0.7) node[midway,above] {${\tt c}$};
			\draw[thick,red] (1,-0.7) -- (3,-0.7) node[midway, above] {${\tt b}$};
			\draw[thick,blue] (3,-0.7) -- (7,-0.7)  node[midway,above] {${\tt c}$};
			
			\draw[->, thick] (-.3, 0.5) .. controls (-.9, 0.4) and (-.9, -0.6) .. (-.3, -0.7) node[midway, left] {$T$};
		\end{tikzpicture}
		\caption{A DIET (on the left) and its associated IET (on the right).}
		\label{fig:7diet}
	\end{figure}
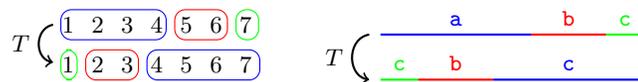
	
	In particular, one can view every primitive $\pi$-clustering word $w \in \A^*$ as a DIET associated with the composition $\Psi_{\A}(w)$ of $|w|$ and the permutation $\pi$, the permutation $\mu$ describing the action of such a DIET being circular (see~\cite{FerencziZamboni13} for a characterization of $\pi$-clustering words in terms of trajectories in IETs or DIETs).

	\section{Rauzy Induction}
	\label{sec:Rauzy}
	
	Let $\A = \{ a_1 < \ldots < a_d \}$ and $\pi$ a permutation of $\A$.
	Let $T$ be an IET over $[\ell, r)$ associated with $(I_a)_{a \in \A}$ and $\pi$.
	The \emph{transformation induced} by $T$ on a subinterval $J \subset I$ is the map $T': J \to J$ defined by $T'(z) = T^{\nu(z)} (z)$, where $\nu(z) = \min \{ n>0 \; | \; T^n(z) \in J \}$ is the \emph{first return map} of $T$ to $J$.
	Note that $\nu(z)$ is well-defined because IETs do not have wandering intervals (see, e.g.,~\cite{FerencziHubertZamboni24}).
	
	Rauzy induction is a procedure that associates to a regular IET $T$ associated with an alphabet $\A$, a sequence of regular IETs associated with the same reordered alphabet.
	
	The \emph{right Rauzy step}, is the mapping $\rho$ sending $T$ to the induced transformation $T'$ on $[\ell, r')$, where $r'$ is the rightmost between the points in $D(T) \cup D(T^{-1})$.
	Since $T$ is regular, it has no $0$-connections and $|I_{a_d}| \ne |I_{a_{\pi(d)}}|$.
	Moreover, since $\pi$ is irreducible we have $\pi(a_d) \ne a_d$.
	
	If $|I_{a_d}| > |I_{a_{\pi(d)}}|$ then $\Omega(x)$ starts with $a_{\pi(d)} a_d$ for every $x \in I_{a_{\pi(d)}}$; if $|I_{a_d}| < |I_{a_{\pi(d)}}|$ then $\Omega(x)$ starts with $a_{\pi(d)} a_d$ for every $x \in T^{-1}(I_{a_d})$.
	We can actually give a more precise description of the obtained induced intervals: in one case the order given by the alphabet stays the same while the one given by the permutation change, while in the other case the opposite happens.
	Recall that we identify $S_\A$ with $S_d$ when no confusion arises.
	
	\begin{lemma}
		\label{lem:rho}
		Let $T$ be a regular IET associated with $\A = \{ a_1 < \ldots < a_d \}$ and $\pi \in S_\A$.
		Let $h = \pi^{-1}(d)$.
		If $|I_{a_d}| > |I_{a_{\pi(d)}}|$, then $\rho(T)$ is the regular IET associated with $\A$ and $\pi' \in S_\A$ defined as
		$$
		\pi'(i) =
		\left\{
		\begin{array}{ll}
			\pi(i) & \quad {\rm if  } \quad i \le h \\
			\pi(d) & \quad {\rm if } \quad i = h+1 \\
			\pi(i)+1 & \quad {\rm if } \quad i > h+1
		\end{array}
		\right..
		$$
		
		If $|I_{a_d}| < |I_{a_{\pi(d)}}|$, then $\rho(T)$ is the regular IET associated with
		$$
		\A' = \{ a_1 < \ldots < a_h < a_d < a_{h+1} < \ldots < a_{d-1} \}
		$$
		and $\pi \in \A'$ defined as $\pi'(i) = \pi(i)$ for every $i$.
	\end{lemma}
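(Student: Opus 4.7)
The plan is to directly compute the first-return map $T' = \rho(T)$ on $[\ell, r')$ by case analysis and match the result to the lemma's description.

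First, I would pin down $r'$. Regularity of $T$ forbids $0$-connections, so $|I_{a_d}| \ne |I_{a_{\pi(d)}}|$, and irreducibility of $\pi$ forces $h = \pi^{-1}(d) < d$. Hence the rightmost element of $D(T) \cup D(T^{-1})$ is $r - |I_{a_{\pi(d)}}|$ in Case 1 and $r - |I_{a_d}|$ in Case 2. The geometric pivot is then the following: $T([r', r))$ lies in the bottom slot at position $h$, safely inside $[\ell, r')$, while the preimage $T^{-1}([r', r))$ sits inside $I_{a_{\pi(d)}}$ — it equals $I_{a_{\pi(d)}}$ entirely in Case 1 and is its rightmost subinterval of length $|I_{a_d}|$ in Case 2. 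So on this preimage set $T'$ equals $T^2$, and everywhere else $T' = T$.

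Next, I would read off the new top partition of $[\ell, r')$. In Case 1 it is just $I_{a_1}, \ldots, I_{a_{d-1}}$ together with the truncated $I_{a_d}' = I_{a_d} \setminus [r', r)$, and the alphabet remains $\A$. In Case 2 the interval $I_{a_d}$ vanishes entirely, while $I_{a_{\pi(d)}}$ splits into a direct (left) piece of length $|I_{a_{\pi(d)}}| - |I_{a_d}|$ and an indirect (right) piece of length $|I_{a_d}|$. Choosing the labelling so that the new letter $a_d$ occupies alphabet position $h+1$ gives precisely the ordered alphabet $\A'$ announced in the lemma.

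Finally, for each case I would list the new bottom order by tracking where each top interval lands under $T'$ and compare with the canonical listing $(a_{\pi'(1)}, \ldots, a_{\pi'(d)})$: in Case 1 the intervals $I_{a_i}$ with $i \ne \pi(d)$ land in the old bottom slot $\pi^{-1}(i)$, whereas $I_{a_d}'$ and $I_{a_{\pi(d)}}$ together split the old bottom slot $h$ into left and right pieces, reading off the stated $\pi'$; in Case 2 an analogous calculation gives an induced permutation whose one-line notation on $\A'$ coincides with that of $\pi$. Regularity of $T'$ follows because any connection of $T'$ would lift, via the one- or two-step rule defining $T'$, to a connection of $T$, contradicting the assumption. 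The main technical obstacle is the bookkeeping of Case 2: deciding which split piece of $I_{a_{\pi(d)}}$ inherits the old label $a_{\pi(d)}$ and which becomes the new $I_{a_d}$, and then verifying that the resulting bottom ordering does reproduce the one-line notation of $\pi$ on the reordered alphabet.
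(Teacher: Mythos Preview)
Your proposal is correct and follows essentially the same approach as the paper: both identify the set on which the first-return map equals $T^2$ (namely $I_{a_{\pi(d)}}$ in Case~1 and $T^{-1}(I_{a_d})$ in Case~2) versus $T$ elsewhere, then read off the new top partition and the new bottom ordering. Your account is in fact more thorough than the paper's---you explicitly compute $r'$, justify $h<d$ via irreducibility, and sketch the lifting argument for preservation of regularity---whereas the paper's proof leaves these details implicit and simply asserts the resulting partition and permutation.
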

	\begin{proof}
		Let $S = \rho(T)$.
		If $I_{a_d}$ is longer than $I_{a_{\pi(d)}}$, then the domain of $S$ is partitioned by $(I'_a)_{a \in \A}$, where all $I'_a = I_a$ but $I'_{a_d}$, which is cut of its final part.
		The first return map of $T$ into the domain of $S$ is given by $T^2(z)$ if $z \in I_{a_{\pi(d)}}$ and $T(z)$ elsewhere.
		Thus, $T(I_{a_d})$ is split into $S(I_{a_d})$ and $S(I_{a_{\pi(d)}})$.
		
		If $I_{a_d}$ is shorter than $I_{a_{\pi(d)}}$, then $S$ is defined as $T^2(z)$ if $z \in T^{-1}(I_{a_d})$ and $T(z)$ elsewhere.
		Thus, the subinterval $I_{a_{pi(d)}}$ for $T$ is split into $I'_{a_{\pi(d)}}$ and $I'_{a_d}$ in the partition associated with $S$.
		The interval $S(I'_{\pi(d)})$ will remain the rightmost (even though is smaller then $T(I_{\pi(d)})$), so the permutation does not change.
	\end{proof}
	
	The \emph{left Rauzy step} $\lambda$ is defined in a symmetric way considering the interval $[\ell', r)$, where $\ell' \ne \ell$ is the leftmost between the points in $D(T) \cup D(T^{-1})$ and the intervals considered are $I_{a_1}$ and $I_{\pi(a_1)}$.
	
	A symmetrical version of Lemma~\ref{lem:rho} holds.
	
	\begin{lemma}
		\label{lem:lambda}
		Let $T$ be a regular IET associated with $\A = \{ a_1 < \ldots < a_d \}$ and $\pi \in S_\A$.
		Let $h = \pi^{-1}(1)$.
		If $|I_{a_1}| > |I_{a_{\pi(1)}}|$, then $\lambda(T)$ is the regular IET associated with $\A$ and $\pi' \in S_\A$ defined as
		$$
		\pi'(i) =
		\left\{
		\begin{array}{ll}
			\pi(i)-1 & \quad {\rm if } \quad i < h-1 \\
			\pi(1) & \quad {\rm if } \quad i = h-1 \\
			\pi(i) & \quad {\rm if } \quad i \le h
		\end{array}
		\right. .
		$$
		
		If $|I_{a_1}| < |I_{a_{\pi(1)}}|$, then $\lambda(T)$ is the regular IET associated with
		$$
		\A' = \{ a_2 < \ldots < a_{h-1} < a_1 < a_h < \ldots < a_d \}
		$$
		and $\pi \in \A'$ defined as $\pi'(i) = \pi(i)$ for every $i$.
	\end{lemma}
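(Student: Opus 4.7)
The plan is to mirror the proof of Lemma~\ref{lem:rho} using the left-right symmetry inherent in the IET framework. Set $S = \lambda(T)$. By the same counting as for the right step (only now looking at left endpoints of the domain and image partitions), the leftmost point of $D(T) \cup D(T^{-1})$ strictly greater than $\ell$ equals $\ell + \min(|I_{a_1}|, |I_{a_{\pi(1)}}|)$, so $\ell'$ is selected by whichever of $|I_{a_1}|$ and $|I_{a_{\pi(1)}}|$ is strictly smaller (regularity precludes equality and $0$-connections). The interval $[\ell, \ell')$ thus coincides either with $T(I_{a_{\pi(1)}})$ or with $I_{a_1}$, depending on which case we are in.

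In the case $|I_{a_1}| > |I_{a_{\pi(1)}}|$, the removed piece $[\ell, \ell') = T(I_{a_{\pi(1)}})$ sits entirely inside $I_{a_1}$. The induced partition $(I'_a)_{a \in \A}$ has $I'_a = I_a$ for $a \neq a_1$, while $I'_{a_1}$ is $I_{a_1}$ with its initial segment (of length $|I_{a_{\pi(1)}}|$) removed. The first-return map is $T$ everywhere, except on $I_{a_{\pi(1)}}$, where it equals $T^2$. Consequently $T(I_{a_1})$ splits in the image of $S$: the portion that used to be $T(I_{a_{\pi(1)}})$ now carries the label $a_{\pi(1)}$ and takes the place of the deleted image of $a_{\pi(1)}$, i.e.\ it lands at position $h-1$. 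Translating this positional shift letter-by-letter yields exactly the formula for $\pi'$ in the statement (images of letters of index $i < h-1$ each slide down by one, $\pi'(h-1) = \pi(1)$, and the rest is unchanged).

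In the case $|I_{a_1}| < |I_{a_{\pi(1)}}|$, the removed piece $[\ell, \ell') = I_{a_1}$ sits entirely inside $T(I_{a_{\pi(1)}})$. Now it is the domain interval $I_{a_{\pi(1)}}$ that splits: its left part, which under $T$ lands in the removed region, is relabelled $a_1$ (and inserted just before the original position of $a_{\pi(1)}$), while the rest keeps the label $a_{\pi(1)}$. The order in the image is untouched, so the permutation extends unchanged to the reordered alphabet $\A'$ specified in the lemma, and one verifies immediately that $S$ is regular since it inherits the Keane condition from $T$.

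The chief bookkeeping obstacle is to confirm that the index shifts in the first case align with the stated formula: one must carefully identify the new position of the letter whose image was $\pi(1)$ and check that every other $\pi(i)$ with $i < h-1$ shifts down by exactly one. Once this is done, the symmetric transcription of the arguments from Lemma~\ref{lem:rho} completes the proof.
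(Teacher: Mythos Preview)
Your proposal is correct and follows exactly the route the paper indicates: the paper does not give a separate proof of Lemma~\ref{lem:lambda} but simply states that ``a symmetrical version of Lemma~\ref{lem:rho} holds,'' and your argument is precisely that symmetric transcription, spelled out in detail.
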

	
	In~\cite{branching} it is shown that if $T$ is a regular IET, then for every $w \in \LL(T)$ the transformation induced by $T$ on $I_w$ is of the form $\chi(T)$, with $\chi \in \{ \rho, \lambda \}^*$, where each morphism corresponds to a Rauzy step.
	
	The following result is a consequence of Propositions 3.15, 4.12, 4.14 and Theorem 4.15 in~\cite{branching}.\footnote{The result being separated in several statements in~\cite{branching}, the authors managed to avoid using multiple indexes as it is done here.}
	
	\begin{proposition}[\cite{branching}]
		\label{pro:Iw}
		Let $T$ be a regular IET and $w \in \LL(T)$.
		The transformation induced by $T$ on $I_w$ is of the form $\chi(T)$, where $\chi \in \{ \rho, \lambda \}^*$.
		Moreover, let $\chi = \chi_n \circ \cdots \circ \chi_1$.
		Then the morphism $\theta = \theta_1 \circ \cdots \circ \theta_n$ is an automorphism of the free group sending $\A$ to $\R(w)$, where
		$$
		\theta_i =
		\left\{
		\begin{array}{ll}
			\alpha_{\pi(a^{(i)}_d),a^{(i)}_d} & \quad {\rm if } \quad \chi_i = \rho \quad {\rm  and } \quad |I_{a^{(i)}_d}| > |I_{\pi(a^{(i)}_d)}| \\
			\tilde{\alpha}_{a^{(i)}_d,\pi(a^{(i)}_d)} & \quad {\rm if } \quad \chi_i = \rho \quad {\rm and } \quad |I_{a^{(i)}_d}| < |I_{\pi(a^{(i)}_d)}| \\
			\alpha_{\pi(a^{(i)}_1),a^{(i)}_1} & \quad {\rm if } \quad \chi_i = \lambda \quad {\rm and } \quad |I_{a^{(i)}_1}| > |I_{\pi(a^{(i)}_1)}| \\
			\tilde{\alpha}_{a^{(i)}_1,\pi(a^{(i)}_1)} & \quad {\rm if } \quad \chi_i = \lambda \quad {\rm and } \quad |I_{a^{(i)}_1}| < |I_{\pi(a^{(i)}_1)}|
		\end{array}
		\right.
		$$
		and $\{ a^{(i)}_1 < \cdots < a^{(i)}_d \}$ is the alphabet associated to $\chi_i \circ \cdots \circ \chi_1(T)$.
	\end{proposition}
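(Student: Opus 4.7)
The plan is to prove both assertions simultaneously by induction on the length $n$ of $\chi$. Set $T^{(0)} = T$ and $T^{(k)} = \chi_k(T^{(k-1)})$, and let $\A^{(k)}$ be the alphabet carried by $T^{(k)}$ (either $\A$ itself or the reordering prescribed by Lemmas~\ref{lem:rho} and~\ref{lem:lambda}). The inductive invariant I maintain is: the domain of $T^{(k)}$ equals $I_{u^{(k)}}$ for some $u^{(k)} \in \LL(T)$, the morphism $\sigma^{(k)} = \theta_1 \circ \cdots \circ \theta_k$ is an automorphism of the free group on $\A$, and $\sigma^{(k)}(\A^{(k)}) = \R(u^{(k)})$. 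The base case $k=0$ is tautological (take $u^{(0)} = \varepsilon$, so $\sigma^{(0)}$ is the identity).

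For the inductive step I apply a further Rauzy step $\chi_{k+1} \in \{\rho, \lambda\}$. Using Lemmas~\ref{lem:rho} and~\ref{lem:lambda}, which describe both the new domain and the new alphabet order, I must establish two combinatorial claims: (a) the new domain equals $I_{u^{(k+1)}}$, where $u^{(k+1)}$ is obtained from $u^{(k)}$ by extending its suffix (if $\chi_{k+1} = \rho$) or prefix (if $\chi_{k+1} = \lambda$) by the letter whose subinterval has just been ``swallowed''; (b) in the first-return map of $T^{(k)}$ to the new domain, each letter of $\A^{(k+1)}$ is realised either as a single letter of $\A^{(k)}$ (when its subinterval was not cut) or as a two-letter word obtained by concatenating the truncated letter with its image (for $\rho$) or preimage (for $\lambda$). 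The four cases in the statement correspond to the four combinations of $\chi_{k+1} \in \{\rho, \lambda\}$ and the sign of the length difference between the extreme subinterval and its $\pi$-image at step $k$; in each case the morphism $\theta_{k+1}$ of the statement reproduces exactly this expansion.

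Given (a) and (b), the induction closes cleanly: by the inductive hypothesis, composing with $\sigma^{(k)}$ converts any word in $(\A^{(k)})^*$ into the word in $\A^*$ actually read along the corresponding trajectory segment of $T$, so $\sigma^{(k+1)}(a) = \sigma^{(k)}(\theta_{k+1}(a))$ is precisely the return word to $u^{(k+1)}$ visiting the $a$-subinterval. The automorphism property is preserved because every $\alpha_{a,b}$ and $\tilde{\alpha}_{a,b}$ is a Nielsen transformation, hence a free-group automorphism. The first assertion---that every $I_w$ for $w \in \LL(T)$ arises as the domain of some $\chi(T)$---is then obtained by a greedy descent: starting from $T$, apply $\rho$ as long as the right endpoint of the current domain lies strictly to the right of that of $I_w$, and $\lambda$ symmetrically on the left; Keane's condition guarantees termination in finitely many steps, and by (a) the final domain is exactly $I_w$.

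The main obstacle is claim (b): one must verify in each of the four branches of the definition of $\theta_{k+1}$ that the first-return combinatorics coincides with the stated morphism, keeping careful track of whether Lemmas~\ref{lem:rho} or~\ref{lem:lambda} have reordered the alphabet (so that ``$a_d^{(i)}$'' and ``$a_1^{(i)}$'' in the statement refer to the alphabet post-step, whereas the returned word lives in the pre-step alphabet). Once the bookkeeping is set up correctly, each case reduces to a short explicit check on which subinterval of the new domain requires two applications of $T^{(k)}$ to return and which requires only one, exactly as in the proof sketch of Lemma~\ref{lem:rho}.
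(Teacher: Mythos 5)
The paper offers no proof of Proposition~\ref{pro:Iw}: it is imported wholesale from~\cite{branching} (``a consequence of Propositions 3.15, 4.12, 4.14 and Theorem 4.15''), so the only fair comparison is with the argument there, whose overall architecture --- induction along the sequence of Rauzy steps, with each step contributing a Nielsen automorphism $\alpha_{a,b}$ or $\tilde{\alpha}_{a,b}$ whose composition sends $\A$ to $\R(w)$ --- your sketch does reproduce. Your bookkeeping of the composition order ($\sigma^{(k+1)} = \sigma^{(k)} \circ \theta_{k+1}$, matching $\theta = \theta_1 \circ \cdots \circ \theta_n$ against $\chi = \chi_n \circ \cdots \circ \chi_1$) and the identification of the four branches with the two-step versus one-step returns are also correct in spirit.

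There is, however, a genuine gap in your inductive invariant (a): it is simply false that the domain of $T^{(k)}$ is a cylinder $I_{u^{(k)}}$ after every Rauzy step. Already after a single right step the domain is $[\ell, r')$ with $r' = r - \min\bigl(|I_{a_d}|, |I_{a_{\pi(d)}}|\bigr)$, and for $d \ge 3$ this semi-interval straddles several of the $I_a$ and hence is contained in no cylinder; e.g.\ for a symmetric $3$-IET it ends at $\lambda_b + \max(\lambda_a,\lambda_c)$, which is the endpoint of no $I_u$. The correct invariant, and the one used in~\cite{branching}, is that the reachable domains are the \emph{admissible} semi-intervals, whose endpoints lie in the negative orbits of the separation points; cylinders are admissible but form a strict subclass, and one lands on $I_w$ only at the end of a suitably chosen sequence $\chi$. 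This error propagates into your final ``greedy descent'': the claim that the process terminates \emph{exactly} at the endpoints of $I_w$ (rather than overshooting them) is precisely the admissibility of $I_w$, which needs an argument --- the endpoints of $I_w$ are of the form $T^{-k}(\gamma)$ for formal discontinuities $\gamma$ and $k < |w|$, and regularity guarantees they are met by the decreasing sequence of candidate endpoints --- and cannot be deduced ``by (a)''. Claim (b) and the closing of the induction would survive essentially unchanged once the invariant is restated for admissible intervals and their associated return words, but as written the induction does not get off the ground.
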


	\section{Rauzy Steps and Morphisms}
	
	In order to prove Theorem~\ref{thm:main}, we use the morphisms defined in Section~\ref{sec:preliminaries}\ to step back from $I_w$ to $[\ell, r)$.
	
	Let us show that under certain additional conditions, clustering is preserved by these morphisms.
	
	\begin{lemma}
		\label{lem:clustering}
		Let $w$ be a primitive word over $\A = \{a_1 < \ldots < a_d \}$.
		Suppose $w$ is $\pi$-clustering on $\A$ for some permutation $\pi$.
		\begin{enumerate}
			\item Let $\mu \in S_\A$ be a permutation.
			Then $\mu(w) \in \A'^*$ is $\pi'$-clustering, with $\A' = \{ \mu(a_1) < \ldots < \mu(a_d) \}$ and $\pi' \in S_{\A'}$.
			
			\item If $b = a_1$, $\pi^{-1}(a) = a_i$ and $\pi^{-1}(b) = a_{i+1}$ with $1 \le i < d$, then $\alpha_{a,b}(w) \in \A^*$ is $\pi'$-clustering, for a certain $\pi' \in S_\A$.
			
			\item If $b = a_d$, $\pi^{-1}(b) = a_i$ and $\pi^{-1}(a) = a_{i+1}$ with $1 \le i < d$, then $\alpha_{a,b}(w) \in \A^*$ is $\pi'$-clustering, for a certain $\pi' \in S_\A$.
			
			\item If $\pi^{-1}(b) = a_1$, $a=a_i$ and $b=a_{i+1}
			$ with $1 \le i < d$, then $\tilde{\alpha}_{a,b}(w) \in \A'^*$ is $\pi'$-clustering, for a certain $\pi' \in S_{\A'}$, where $\A' = \{a_i < a_1 < \ldots < a_{i-1} < a_{i+1} < \ldots < a_d \}$.
			
			\item If $\pi^{-1}(b) = a_d$, $b=a_i$ and $a=a_{i+1}$ with $1 \le i < d$, then $\tilde{\alpha}_{a,b}(w) \in \A'^*$ is $\pi'$-clustering, for a certain $\pi' \in S_{\A'}$, where $\A' = \{a_1 < \ldots < a_i < a_{i+2} < \ldots < a_d < a_{i+1} \}$.
		\end{enumerate}
	\end{lemma}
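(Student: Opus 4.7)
\medskip
\noindent\textbf{Plan.}
Part~(1) is a relabeling argument: since $\mu$ preserves orders by the very definition of $\A'$, the lex order on conjugates of $\mu(w)$ over $\A'$ coincides with the lex order on conjugates of $w$ over $\A$. Hence $\bwt{\A'}{\mu(w)}$ is the letter-wise relabeling of $\bwt{\A}{w}$, the block structure is preserved, and the corresponding $\pi'$ is read off via $S_\A\cong S_{\A'}$.

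For parts~(2) and~(3), set $w' = \alpha_{a,b}(w)$, $n = |w|$, $k_i = |w|_a$, so $|w'| = n+k_i$. The conjugates of $w'$ split into two families: the $n$ \emph{lifted} conjugates $\alpha_{a,b}(C)$ for each conjugate $C$ of $w$, and the $k_i$ \emph{new} conjugates of the form $N_p = b\cdot\alpha_{a,b}(w_{p+1}\cdots w_{p+n-1})\cdot a$, one for each $a$-position $p$ of $w$. My plan is threefold: (i) verify that $C\mapsto\alpha_{a,b}(C)$ preserves lex order via a short case analysis at the first differing position; (ii) locate each $N_p$ in the sorted list of conjugates of $w'$ and show that the $N_p$'s form a single contiguous block; (iii) read off $\bwt{\A}{w'}$ and exhibit a $\pi'\in S_\A$ making $w'$ $\pi'$-clustering. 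The net effect is that the old $a$-block of $\bwt{\A}{w}$ (whose conjugates now end in the inserted $b$) merges with the old $b$-block, while a fresh $a$-block of size $k_i$ contributed by the new conjugates, which all end in $a$, appears at the location dictated by their lex rank.

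Parts~(4) and~(5) are the symmetric analysis for $\tilde\alpha_{a,b}$: now the inserted $b$ precedes each $a$, so the $k_i$ \emph{extra} conjugates start with $a$ rather than with $b$. The alphabet reordering to $\A'$ is engineered precisely to place $a$ at the extreme position (smallest in~(4) via $\pi^{-1}(b)=a_1$, largest in~(5) via $\pi^{-1}(b)=a_d$), forcing these extra conjugates to cluster contiguously at one end of the sorted list of conjugates of $w'$; the remaining three-step argument is then analogous. Should $w'$ fail to be primitive, Proposition~\ref{pro:clusteringprimitive} reduces the statement to its primitive root.

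The main obstacle will be step~(ii): pinpointing where each new (or extra) conjugate sits in the sorted list of conjugates of $w'$ and verifying that they form a single contiguous block adjacent to the correct old block. This is where the hypotheses combine most sharply. The adjacency of $\pi^{-1}(a)$ and $\pi^{-1}(b)$ in $\A$ prevents any other-letter conjugates from slipping between the old $a$- and $b$-blocks after lifting, while the extremality condition (on $b$ in parts~(2) and~(3); on $a$ in the reordered alphabet $\A'$ for parts~(4) and~(5)) ensures that the new or extra conjugates settle at exactly the correct endpoint of the sorted list.
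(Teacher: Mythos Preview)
Your proposal is correct and follows the same strategy as the paper: both argue directly at the level of the Burrows--Wheeler transform, observing that under $\alpha_{a,b}$ (resp.\ $\tilde\alpha_{a,b}$) the old $a$-block turns into $b$'s and merges with the adjacent $b$-block, while the ``new'' (resp.\ ``extra'') conjugates contribute a fresh contiguous block whose position is forced by the extremality hypothesis on $b$ (resp.\ on $a$ in the reordered alphabet). Your plan is in fact a more detailed elaboration of the paper's own argument, which is very terse --- the paper essentially asserts the block-merging without spelling out the lifted/new decomposition of conjugates, the order-preservation of $C\mapsto\alpha_{a,b}(C)$, or the contiguity check that you correctly flag as the crux.
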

	
	\begin{proof}
		We proceed by addressing each case.
		\begin{enumerate}
			\item Since $w$ is $\pi$-clustering, $\bwt{\A}{w}$ consists of contiguous blocks - possibly of length $0$, if $w$ is not pangrammatic - of each letter of $\A$.
			Since an application of $\mu$ to $w$ simply amounts to letter renaming, we immediately obtain clustering of $\mu (w)$. Defining $\pi' = \mu \circ \pi \circ \mu^{-1} \in S_{\A'}$, via elementary permutation properties we see that $w$ is $\pi'$-clustering.
			
			\item Define $\pi' \in S_{\A}$ by altering $\pi$ such that $\pi^{-1} (a)$ and $\pi^{-1} (b)$ are adjacent in the cycle. If $\pi^{-1} (a) = a_i$ and $\pi^{-1} (b) = a_{i+1}$, then the blocks $\pi (a)$ and $\pi (b)$ appear consecutively in $\bwt{\A}{w}$.
			Replacing each $a$ by $ab$ in $w$ connects the blocks $\pi (a)$ and $\pi (b)$ into a single-block adjacency in $\bwt{\A}{\alpha_{a,b} (w)}$. Thus, $\alpha_{a,b} (w)$ is $\pi'$ clustering.
			
			\item This proof is identical to that of the second argument.
			
			\item We have $\pi(a_1) = b$. We define $\A' = \{a_i < a_1 < \ldots < a_{i-1} < a_{i+1} < \dots <a_d \}$ so that $a$ is the new smallest letter in $\A'$ with $b$ appearing later.
			Now let $\pi' \in S_{\A'}$ be the permutation such that $a$ is treated as the first letter and $b$ follows it somewhere in the cycle.
			Applying $\tilde{\alpha}_{a,b}$ to $w$ replaces every $a$ in $w$ with $ba$.
			In the Burrows-Wheeler transform, this forces the blocks of $\pi (a)$ and $\pi (b)$ to be contiguous.
			On the new order $\A'$ and under permutation $\pi'$, we immediately obtain the clustering via elementary Burrows-Wheeler transform properties.
			
			\item Let $\A' = \{ a_1 < \dots < a_i < a_{i+2} < \ldots < a_d < a_{i+1} \}$ so that $a$ is the largest letter of $\A'$.
			We define $\pi' \in S_{\A'}$ so that $\pi' (a_d) = b$ and $\pi' (b)$ such that $\pi' (b)$ sits next to $\pi' (a)$.
			Replacing $a$ by $ba$ again merges the blocks of $\pi(a)$ and $\pi(b)$ contiguously in the Burrows-Wheeler transform of $w$ over $\A'$.
			Thus, $\tilde{\alpha}_{a,b} (w)$ is $\pi'$-clustering.
		\end{enumerate}
	\end{proof}

	We are now able to prove our main result.
	
	\begin{proof}[of Theorem~\ref{thm:main}]
		Let $T$ be a regular IET and $w \in \LL(T)$.
		By Proposition~\ref{pro:Iw} there exist $\chi_1, \ldots, \chi_n \in \{ \rho, \lambda \}$ such that $\chi_n \circ \cdots \circ \chi_1(T)$ is the IET induced by $T$ on $I_w$.
		From the same proposition we obtain a morphism $\theta_i$ for each $1 \le i \le n$.
		
		By Lemmata~\ref{lem:rho} and~\ref{lem:clustering}, $\theta_i$ sends a clustering word in $\LL(\chi_{i-1} \circ \cdots \circ \chi_1(T))$ to a clustering word in $\LL(\chi_i \circ \cdots \circ \chi_1(T))$.
		Thus, by induction on $n$, $\theta(u)$ is clustering for every clustering word $u \in \LL(T)$.
		
		Since every letter is trivially clustering and $\R(w) = \{ \theta(a) \, | \, a \in \A \}$, we can conclude.
	\end{proof}
	
	\begin{example}
		\label{ex:krokzakrokem}
		Let $T$ be the IET associated with the alphabet $\{ {\tt a} < {\tt b} < {\tt c} \}$ and the permutation $\pi = ({\tt b,c,a})$, with $|I_{\tt a}| = 1-2\alpha$, $|I_{\tt b}| = |I_{\tt c}| = \alpha$, where $\alpha = \frac{3-\sqrt{5}}{2}$.
		The IET is regular, since it is just the rotation of the irrational angle $\alpha$.
		The transformation induced to the subinterval $I_{\tt b}$ is $\chi (T)$, with $\chi = \lambda^2 \circ \rho^2$.
		We have $\theta(\{ {\tt a,b,c} \}) = \R({\tt b}) = \{ {\tt bac, b, bacc} \}$, where
		$\theta =
		\alpha_{{\tt a},{\tt c}} \circ
		\tilde{\alpha}_{{\tt c},{\tt a}} \circ
		\tilde{\alpha}_{{\tt a},{\tt b}} \circ
		\tilde{\alpha}_{{\tt c},{\tt b}}$.
	\end{example}
	
	\begin{figure}[ht]
		\centering
		\begin{tikzpicture}[x=8cm, y=1cm]
			\draw[thick,red] (0,0) -- (1-2*0.382,0) node[midway,above] {${\tt a}$};
			\draw[thick,blue] (1-2*0.382,0) -- (1-0.382,0) node[midway,above] {${\tt b}$};
			\draw[thick,green] (1-0.382,0) -- (1,0)  node[midway,above] {${\tt c}$};
			\draw[thick,blue] (0,-0.5) -- (0.382,-0.5) node[midway,above] {${\tt b}$};
			\draw[thick,green] (0.382,-0.5) -- (2*0.382,-0.5) node[midway, above] {${\tt c}$};
			\draw[thick,red] (2*0.382,-0.5) -- (1,-0.5)  node[midway,above] {${\tt a}$};
			\draw[->] (-.02,0-.1) to[bend right] node[midway, left] {} (-.02,-0.5+.1);
			
			\draw[thick,red] (0,-.5-.7) -- (1-2*0.382,-.5-.7) node[midway,above] {${\tt a}$};
			\draw[thick,blue] (1-2*0.382,-.5-.7) -- (1-0.382,-.5-.7) node[midway,above] {${\tt b}$};
			\draw[thick,green] (1-0.382,-.5-.7) -- (2*0.382,-.5-.7)  node[midway,above] {${\tt c}$};
			\draw[thick,blue] (0,-.5*2-.7) -- (0.382,-.5*2-.7) node[midway,above] {${\tt b}$};
			\draw[thick,green] (0.382,-.5*2-.7) -- (4*0.382-1,-.5*2-.7) node[midway, above] {${\tt c}$};
			\draw[thick,red] (4*0.382-1,-.5*2-.7) -- (2*0.382,-.5*2-.7)  node[midway,above] {${\tt a}$};
			\draw[->] (-.02, -.5-.7-.1) to[bend right] node[midway, left] {} (-.02,-0.5*2-.7+.1);
			
			\draw[thick,red] (0,-.5*2-.7*2) -- (2-5*0.382,-.5*2-.7*2) node[midway,above] {${\tt a}$};
			\draw[thick,green] (2-5*0.382,-.5*2-.7*2) -- (1-2*0.382,-.5*2-.7*2) node[midway,above] {${\tt c}$};
			\draw[thick,blue] (1-2*0.382,-.5*2-.7*2) -- (1-0.382,-.5*2-.7*2)  node[midway,above] {${\tt b}$};
			\draw[thick,blue] (0,-.5*3-.7*2) -- (0.382,-.5*3-.7*2) node[midway,above] {${\tt b}$};
			\draw[thick,green] (0.382,-.5*3-.7*2) -- (4*0.382-1,-.5*3-.7*2) node[midway, above] {${\tt c}$};
			\draw[thick,red] (4*0.382-1,-.5*3-.7*2) -- (1-0.382,-.5*3-.7*2)  node[midway,above] {${\tt a}$};
			\draw[->] (-.02, -.5*2-.7*2-.1) to[bend right] node[midway, left] {} (-.02,-0.5*3-.7*2+.1);
			
			\draw[thick,green] (2-5*0.382,-.5*3-.7*3) -- (1-2*0.382,-.5*3-.7*3) node[midway,above] {${\tt c}$};
			\draw[thick,red] (1-2*0.382,-.5*3-.7*3) -- (3-7*0.382,-.5*3-.7*3)  node[midway,above] {${\tt a}$};
			\draw[thick,blue] (3-7*0.382,-.5*3-.7*3) -- (1-0.382,-.5*3-.7*3) node[midway,above] {${\tt b}$};
			\draw[thick,blue] (2-5*0.382,-.5*4-.7*3) -- (0.382,-.5*4-.7*3) node[midway,above] {${\tt b}$};
			\draw[thick,green] (0.382,-.5*4-.7*3) -- (4*0.382-1,-.5*4-.7*3) node[midway, above] {${\tt c}$};
			\draw[thick,red] (4*0.382-1,-.5*4-.7*3) -- (1-0.382,-.5*4-.7*3)  node[midway,above] {${\tt a}$};
			\draw[->] (2-5*0.382-.02, -.5*3-.7*3-.1) to[bend right] node[midway, left] {} (2-5*0.382-.02,-0.5*4-.7*3+.1);
			
			\draw[thick,red] (1-2*0.382,-.5*4-.7*4) -- (3-7*0.382,-.5*4-.7*4)  node[midway,above] {${\tt a}$};
			\draw[thick,green] (3-7*0.382,-.5*4-.7*4) -- (2-4*0.382,-.5*4-.7*4) node[midway,above] {${\tt c}$};
			\draw[thick,blue] (2-4*0.382,-.5*4-.7*4) -- (1-0.382,-.5*4-.7*4) node[midway,above] {${\tt b}$};
			\draw[thick,blue] (1-2*0.382,-.5*5-.7*4) -- (0.382,-.5*5-.7*4) node[midway,above] {${\tt b}$};
			\draw[thick,green] (0.382,-.5*5-.7*4) -- (4*0.382-1,-.5*5-.7*4) node[midway, above] {${\tt c}$};
			\draw[thick,red] (4*0.382-1,-.5*5-.7*4) -- (1-0.382,-.5*5-.7*4)  node[midway,above] {${\tt a}$};
			\draw[->] (1-2*0.382-.02, -.5*4-.7*4-.1) to[bend right] node[midway, left] {} (1-2*0.382-.02,-0.5*5-.7*4+.1);
			
			\draw[->] (-.1, -0.4) to[bend right] node[midway, left] {$\rho$} (-.1,-.1-0.5-0.7);
			\draw[->] (-.1, -0.4-.7-.5) to[bend right] node[midway, left] {$\rho$} (-.1,-.1-0.5-0.7*2-.5);
			\draw[->] (-.1, -0.4-.7*2-.5*2) to[bend right] node[midway, left] {$\lambda$} (-.1,-.1-0.5-0.7*3-.5*2);
			\draw[->] (-.1, -0.4-.7*3-.5*3) to[bend right] node[midway, left] {$\lambda$} (-.1,-.1-0.5-0.7*4-.5*3);
			
			\draw[<-] (1+.1, -0.4) to[bend left] node[midway, right] {$\alpha_{{\tt a},{\tt c}}$} (1+.1,-.1-0.5-0.7);
			\draw[<-] (1+.1, -0.4-.7-.5) to[bend left] node[midway, right] {$\tilde{\alpha}_{{\tt c},{\tt a}}$} (1+.1,-.1-0.5-0.7*2-.5);
			\draw[<-] (1+.1, -0.4-.7*2-.5*2) to[bend left] node[midway, right] {$\tilde{\alpha}_{{\tt a},{\tt b}}$} (1+.1,-.1-0.5-0.7*3-.5*2);
			\draw[<-] (1+.1, -0.4-.7*3-.5*3) to[bend left] node[midway, right] {$\tilde{\alpha}_{{\tt c},{\tt b}}$} (1+.1,-.1-0.5-0.7*4-.5*3);

		\end{tikzpicture}
		\caption{Series of Rauzy steps and their associated morphisms.}
		\label{fig:krokzakrokem}
	\end{figure}
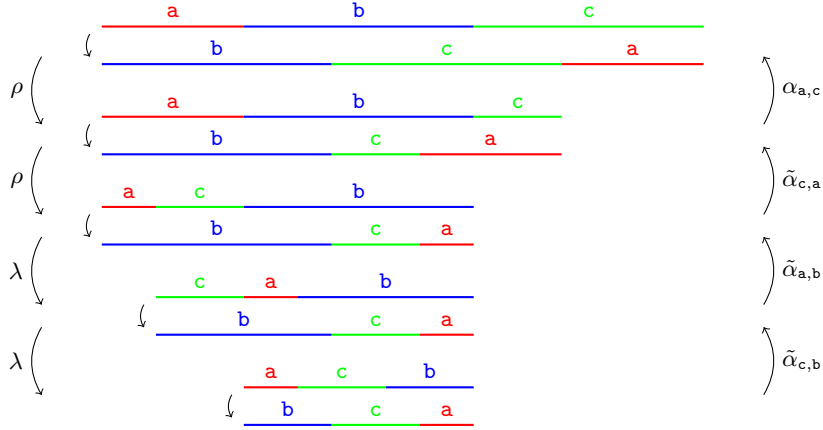

	\section{See the Forest for the IETs}
	\label{sec:forest}
	
	To conclude our study of clustering words in languages generated by interval exchanges, we now slightly vary our trajectory to discuss the connection between IETs (resp. DIETs) and dendricity.
	
	Let $\LL \subset \A^*$ be a language.
	The \emph{extension graph} $\G(w)$ of a word $w \in \LL$ is the undirected bipartite graph having as vertices the disjoint union of $L(w) = \{a \in \A \;| \; aw \in \LL \}$ and $R(w) = \{b \in \A \; | \; wb \in \LL \}$, and edges $B(w) = \{ (a,b) \in \A^2 \; | \; awb \in \LL \}$.
	The graph $\G(w)$ is \emph{compatible} with two orders $<_1$ and $<_2$ on $\A$ if for every $(a,b), (c,d) \in B(w)$ one has the implication $a <_1 c \Longrightarrow b \le_2 d$.
	
	A language $\LL$ is said to be \emph{dendric} if the extension graph of every $w \in \LL$ is a tree, i.e., acyclic and connected (whence the original name \emph{tree set} in~\cite{acyclicconnectedtree}).
	Following the same hellenophilic spirit, we call a language \emph{alsinic} if the extension graph of every word in it is a forest, i.e., acyclic but not necesserily connected.
	A language $\LL$ is \emph{ordered dendric} (resp. \emph{ordered alsinic}) for two orders $<_1$ and $<_2$ if every $\G(w)$, with $w \in \LL$, is compatible for $<_1$ and $<_2$ (in~\cite{bifixcodesiets} the term \emph{planar tree} was used since the edges do not cross).
	Examples of dendric but not ordered dendric languages are given by Arnoux-Rauzy words on more than two letters~\cite{bifixcodesiets}.
	
	Ordered alsinic languages are strictly linked to IETs.
	
	\begin{theorem}[\cite{FerencziHubertZamboni24,FerencziZamboni08}]
		$\LL$ is the language of an IET $T$ if and only if it is a recurrent ordered alsinic language.
		
		$\LL$ is the language of a minimal IET if and only if it is an aperiodic, uniformly recurrent ordered alsinic language.
		
		$\LL$ is the language of a regular IET if and only if it is a uniformly recurrent ordered dendric set.
	\end{theorem}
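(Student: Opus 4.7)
The plan is to prove each of the three characterizations by establishing both implications, leveraging the geometric picture that for any $w\in\LL(T)$ the cylinder $I_w$ is a single interval on which $T^{|w|}$ acts as a translation. I will first do the forward direction uniformly, then handle the converse, and finally indicate which extra hypotheses (aperiodicity, uniform recurrence, Keane) correspond to which extra properties (minimality, regularity).

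For the forward direction, fix $w\in\LL(T)$ with $n=|w|$. Since $T^n$ is continuous on $I_w$, it is a translation there, so the subdivision of $I_w$ by the cylinders $I_{wb}$ for $b\in R(w)$ is naturally ordered from left to right exactly by the alphabet order on $R(w)$ (because $T^n(I_w)$ is an interval subdivided by the $I_b$'s, which are themselves ordered by $<$). Dually, the subdivision of $I_w$ by the images $T(I_{aw})$ for $a\in L(w)$ is ordered by $<$ on $L(w)$. A biextension $awb\in\LL$ corresponds to a non-empty intersection $T(I_{aw})\cap I_{wb}$, i.e.\ a cell in the common refinement of two ordered partitions of a single interval. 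Such a common refinement gives a bipartite graph whose edges cannot cross in the plane, hence it is a forest and it is compatible with $(<,<)$: this is precisely ordered alsinicity. Recurrence is immediate for any IET language. When $T$ is minimal, one gets uniform recurrence and (assuming an irreducible permutation, forced by minimality) aperiodicity. When $T$ is regular, Keane's condition forbids connections, which is exactly what is needed to prevent one of the two partitions of $I_w$ from having a cut point coinciding with a cut of the other; this forces $\G(w)$ to be connected, i.e.\ a tree.

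For the converse, given a recurrent ordered alsinic $\LL$, I would build an IET realising it. Pick a shift-invariant probability measure $\mu$ on $\LL$ (which exists by recurrence) and assign to each letter $a\in\A$ the length $|I_a|=\mu([a])$, forming an interval $I=[0,\sum_a \mu([a]))$ partitioned from left to right by the $I_a$. The order compatibility of $\G(\varepsilon)$ determines a permutation $\pi$ on $\A$ describing how the letters must be reordered so that the measures on left extensions and right extensions of the empty word match up; define $T$ as the corresponding piecewise translation. A direct induction on word length, using the ordered alsinic compatibility at every level to prove that the cylinder intervals of words of length $n+1$ refine those of length $n$ consistently, shows $\LL(T)=\LL$. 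Uniform recurrence on the combinatorial side translates into minimality of $T$, aperiodicity rules out the periodic components of $T$, and ordered dendricity of every $\G(w)$ precludes $0$-connections and (together with uniform recurrence) rules out all higher connections, giving the Keane condition and hence regularity.

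The main obstacle is the converse direction, specifically two points. The first is the existence and correct use of the invariant measure: one must ensure that the lengths $|I_a|$ and the resulting $T$ genuinely recover $\LL$, which requires showing that every $\mu$-measure of a cylinder $[w]$ equals the length of the interval constructed from the extension structure along the prefixes of $w$; this is where the ordered alsinic hypothesis must be used most delicately, because forests (unlike trees) can fail to be connected and one has to verify that the missing connectivity is precisely compensated by the presence of $0$-connections in $T$. The second delicate point is the upgrade from alsinic to dendric under the Keane condition: ruling out all connections, not just $0$-connections, requires the uniform recurrence assumption to propagate local tree-ness of every $\G(w)$ into a global no-connection statement about $T$, which is the heart of the equivalence in item~(3).
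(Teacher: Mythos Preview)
The paper does not prove this theorem at all: it is stated with citations to \cite{FerencziHubertZamboni24,FerencziZamboni08} and used as a black box, so there is no ``paper's own proof'' to compare against. Your sketch is therefore not a reconstruction of anything in the present paper but an attempt at the original literature results.

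That said, your outline is essentially the strategy used in those references. The forward direction is the easy half and your argument is right in spirit: for each $w$ the cylinder $I_w$ is a single subinterval, the right extensions $I_{wb}$ partition it in the alphabet order, the left extensions $T(I_{aw})$ partition it as well, and the biextensions are exactly the cells of the common refinement of two interval partitions, which is always an acyclic bipartite graph (a tree when no separation point is shared, a forest otherwise). One correction: the partition by the $T(I_{aw})$ is \emph{not} ordered by $<_\A$ on $L(w)$ but by $<_\pi$, since it is the images $T(I_a)$ that are laid out according to $\pi$; the compatibility one obtains is with the pair $(<_\pi,<_\A)$, not $(<,<)$.

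For the converse you have correctly identified the substantive work and its location. Building the IET from an invariant measure and the permutation read off $\G(\varepsilon)$ is the standard route; the delicate step is indeed the inductive verification that the measure-theoretic cylinder lengths reproduce $\LL$ level by level, and that dendricity (every $\G(w)$ a tree) together with uniform recurrence kills all connections, not just the $0$-connections. You flag these honestly as obstacles rather than pretending they are done; that is accurate, and filling them in would amount to reproving the cited theorems of Ferenczi--Zamboni and Ferenczi--Hubert--Zamboni. Since the present paper only \emph{quotes} this result, no more is required here.
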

	
	As seen above, clustering words are associated with DIETs, and these can be seen as IETs where intervals have integer lengths.
	The following result from~\cite{FerencziHubertZamboni23} make this link explicit.
	Let us denote by $<_{\A}$ the order on the alphabet $\A$ and by $<_\pi$ the order given by $a <_\pi b$ when $\pi^{-1}(a) <_\A \pi^{-1}(b)$.
	
	\begin{theorem}[\cite{FerencziHubertZamboni23}]
		\label{thm:FHZ23}
		A word $w \in \A^*$ is $\pi$-clustering if and only if for every bispecial word $v \in \LL(w^\omega)$, the graph $\G(v)$ is compatible with the orders $<_\pi$ and $<_\A$.
	\end{theorem}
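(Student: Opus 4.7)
The plan is to prove both directions by directly relating the lexicographic ordering of the conjugates of $w$, which governs $\bwt{\A}{w}$, to the compatibility condition at bispecial factors of $\LL(w^\omega)$. By Proposition~\ref{pro:clusteringprimitive} I may reduce to the case where $w$ is primitive, so that $w$ has exactly $|w|$ distinct conjugates, namely the length-$|w|$ factors of $w^\omega$, and the last letter of the conjugate starting at cyclic position $i$ is precisely the letter of $w^\omega$ at position $i-1$.

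For the forward direction, I suppose $w$ is $\pi$-clustering, fix a bispecial $v \in \LL(w^\omega)$, and pick two pairs $(a_1,b_1), (a_2,b_2) \in B(v)$ with $a_1 <_\pi a_2$. Choosing occurrences of $a_1 v b_1$ and $a_2 v b_2$ in $w^\omega$, the conjugates $c_1$ and $c_2$ of $w$ that begin immediately after these left extensions both start with $v$, and end with $a_1$ and $a_2$ respectively. Since $w$ is $\pi$-clustering, the conjugate ending with $a_1$ lexicographically precedes the one ending with $a_2$; comparing them at the first letter after their common prefix $v$ yields $b_1 \le_\A b_2$, which is exactly the required compatibility at $v$.

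For the converse, I assume that $\G(v)$ is compatible with $<_\pi$ and $<_\A$ for every bispecial $v$. I take two distinct conjugates $c_1, c_2$ of $w$ whose last letters satisfy $a_1 <_\pi a_2$ and show $c_1 <_\A c_2$; doing this for every such pair forces $\bwt{\A}{w}$ to be in $\pi$-clustering form. Let $v$ be the longest common prefix of $c_1, c_2$, and let $b_1, b_2$ be the letters immediately following $v$ in $c_1, c_2$, so $b_1 \ne b_2$. Via the cyclic correspondence between conjugates and positions in $w^\omega$, the distinct letters $a_1, a_2$ are left extensions of $v$ and $b_1, b_2$ are right extensions, so $v$ is bispecial. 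The compatibility hypothesis at $v$ then gives $b_1 <_\A b_2$, so $c_1 <_\A c_2$ as required.

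The main obstacle is the careful handling of the cyclic structure: I must justify that the bispecial factors of $\LL(w^\omega)$ which actually arise are those whose local branching is captured by pairs of conjugates sharing the prefix $v$, and conversely that any left extension $a$ of a bispecial $v$ corresponds to a genuine conjugate of $w$ ending in $a$. Both reduce to the observation that the last letter of a conjugate is the cyclically preceding letter in $w$, but this translation has to be stated cleanly to make the matching between pairs in $B(v)$ and pairs of conjugates with prescribed first and last letters fully transparent.
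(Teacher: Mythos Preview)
The paper does not supply its own proof of this statement; Theorem~\ref{thm:FHZ23} is quoted from \cite{FerencziHubertZamboni23} and used as a black box to derive the corollary that follows. So there is nothing to compare against on the paper's side.

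Your argument is correct and is essentially the standard direct proof of the equivalence. One point worth making fully explicit to close the ``main obstacle'' you flag at the end: in both directions you are implicitly using that any bispecial factor $v$ of $\LL(w^\omega)$, with $w$ primitive, satisfies $|v| < |w|$. This holds because a factor of $w^\omega$ of length at least $|w|$ occurs at a unique cyclic position (by primitivity), hence has a unique left and a unique right extension and cannot be bispecial. Once $|v| < |w|$ is granted, an occurrence of $a_i v b_i$ in $w^\omega$ yields a conjugate of $w$ that starts with $v b_i$ (the $(|v|+1)$-th letter really is $b_i$) and whose last letter is the cyclically preceding letter $a_i$; conversely, two distinct conjugates determine such data with $b_1 \neq b_2$ at the longest common prefix. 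With this remark in place, your forward and backward implications go through exactly as written: $\pi$-clustering says precisely that conjugates ending in $a_1$ precede, for $<_\A$, those ending in $a_2$ whenever $a_1 <_\pi a_2$, and comparing at position $|v|+1$ translates this into the compatibility condition at $v$.
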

	
	The following result easily follows.
	
	\begin{corollary}
		A word $w \in \A^*$ is $\pi$-clustering if and only if $\LL(w^\omega)$ is ordered alsinic for the orders $<_\pi$ and by $<_\A$.
	\end{corollary}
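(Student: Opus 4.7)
The plan is to derive this corollary directly from Theorem~\ref{thm:FHZ23} by isolating its only extra content: that compatibility of the extension graphs at every bispecial factor already forces every such graph to be a forest.

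First I will reduce the ordered alsinic condition to bispecial factors. If $v \in \LL(w^\omega)$ is not bispecial, then $|L(v)|=1$ or $|R(v)|=1$, so $\G(v)$ is a star—automatically a forest—and compatibility is vacuous, since the implication $a <_1 c$ never fires when all left (resp.\ all right) endpoints of edges coincide. Hence both requirements composing ordered alsinicity need only be verified on bispecial factors, which is exactly the scope of Theorem~\ref{thm:FHZ23}.

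The substantive step is the following claim: any extension graph $\G(v)$ compatible with two orders $<_1,<_2$ on $\A$ is acyclic. I argue by contradiction. Suppose $\G(v)$ contains a cycle $C$, necessarily of even length at least $4$. Let $v^* \in L(v)$ be the $<_1$-minimum left vertex appearing in $C$, with cycle-neighbors $u \ne u'$ in $R(v)$; assume without loss of generality $u <_2 u'$. The right vertex $u$ has a second cycle-neighbor $v' \in L(v) \setminus \{v^*\}$, and $v^* <_1 v'$ by minimality of $v^*$. Then $(v^*, u')$ and $(v', u)$ both belong to $B(v)$, so compatibility applied to this pair forces $u' \le_2 u$, contradicting $u <_2 u'$.

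With the claim in hand, both implications of the corollary follow at once. If $w$ is $\pi$-clustering, Theorem~\ref{thm:FHZ23} provides compatibility of $\G(v)$ at every bispecial $v \in \LL(w^\omega)$; the claim then promotes compatibility to acyclicity, and the first reduction covers the non-bispecial factors, so $\LL(w^\omega)$ is ordered alsinic for $<_\pi$ and $<_\A$. Conversely, ordered alsinicity of $\LL(w^\omega)$ directly supplies compatibility at every bispecial factor, so Theorem~\ref{thm:FHZ23} yields that $w$ is $\pi$-clustering. I do not foresee any genuine obstacle: the only delicate point is the minimum-vertex argument establishing acyclicity from compatibility, which a single application of the compatibility condition dispatches.
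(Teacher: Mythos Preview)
Your argument is correct and follows the same skeleton as the paper: handle non-bispecial factors by observing that $\G(v)$ is then a star (hence a tree, hence trivially compatible), and defer the bispecial factors entirely to Theorem~\ref{thm:FHZ23}. The paper's proof stops there. You go one step further and explicitly verify that compatibility of $\G(v)$ with two linear orders forces acyclicity, via the minimum-left-vertex argument; the paper leaves this implication tacit (it is the standard ``planar bipartite graphs have no cycles'' observation alluded to by the reference to \emph{planar tree} in~\cite{bifixcodesiets}). So the route is the same, but your write-up is more self-contained on the one point where the forest condition in ``ordered alsinic'' needs to be recovered from mere compatibility.
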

	\begin{proof}
		If $u \in \LL(w^\omega)$ is not left special (resp., not right special) then $\G(u)$ is a tree with only one vertex to the left (resp., to the right).
		It is thus possible to order the vertices to the right (resp. to the left) using $<_\A$ (resp. $<_\pi$).
	\end{proof}
	
	Following the same argument seen in Section~\ref{sec:ie}, Theorem~\ref{thm:FHZ23} can be generalized to multiset of words.
	
	\begin{example}
		\label{ex:extensiongraph}
		Let $W$ and $T$ as in Example~\ref{ex:ebwt-diet}.
		Then ${\tt c} <_\pi {\tt b} <_\pi {\tt a}$ and ${\tt a} <_{\A} {\tt b} <_{A} {\tt c}$.
		The extension graphs of the empty word and the letter ${\tt a}$ are shown in Figure~\ref{fig:extensiongraph}.
		It is easy to show that $\G(w)$ contains only one edge for every $w \in \LL \setminus \{ \varepsilon, {\tt a} \}$.
	\end{example}
	
	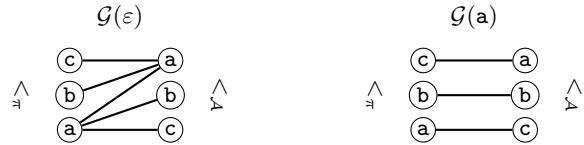
\begin{figure}[ht]
		\centering
		\tikzset{node/.style={draw, circle, inner sep=0.4mm}}
		\tikzset{title/.style={minimum size=0.5cm,inner sep=1pt}}
		\begin{tikzpicture}
			\node[title] (Ee) {$\G(\varepsilon)$};
			\node[node](ecl) [below left = 0.2cm and 0.2cm of Ee] {${\tt c}$};
			\node[node](ebl) [below = 0.1cm of ecl] {${\tt b}$};
			\node[node](eal) [below = 0.1cm of ebl] {${\tt a}$};
			\node[title](ePi) [left=0.2cm of ebl] {\eqmathbox{\rotatebox[origin=c]{-90}{$<_\pi$}}};
			\node[node](ear) [below right = 0.2cm and 0.2cm of Ee] {${\tt a}$};
			\node[node](ebr) [below = 0.1cm of ear] {${\tt b}$};
			\node[node](ecr) [below = 0.1cm of ebr] {${\tt c}$};
			\node[title](eA) [right=0.2cm of ebr] {\eqmathbox{\rotatebox[origin=c]{-90}{$<_\A$}}};
			\path
			(ecl) edge[thick] (ear)
			(ebl) edge[thick] (ear)
			(eal) edge[thick] (ear)
			(eal) edge[thick] (ebr)
			(eal) edge[thick] (ecr);
			
			\node[title] (Ea)[right = 4cm of Ee] {$\G({\tt a})$};
			\node[node](acl) [below left = 0.2cm and 0.2cm of Ea] {${\tt c}$};
			\node[node](abl) [below = 0.1cm of acl] {${\tt b}$};
			\node[node](aal) [below = 0.1cm of abl] {${\tt a}$};
			\node[title](aPi) [left=0.2cm of abl] {\eqmathbox{\rotatebox[origin=c]{-90}{$<_\pi$}}};
			\node[node](aar) [below right = 0.2cm and 0.2cm of Ea] {${\tt a}$};
			\node[node](abr) [below = 0.1cm of aar] {${\tt b}$};
			\node[node](acr) [below = 0.1cm of abr] {${\tt c}$};
			\node[title](aA) [right=0.2cm of abr] {\eqmathbox{\rotatebox[origin=c]{-90}{$<_\A$}}};
			\path
			(acl) edge[thick] (aar)
			(abl) edge[thick] (abr)
			(aal) edge[thick] (acr);
			
		\end{tikzpicture}
		\caption{Extension graphs of $\varepsilon$ and ${\tt a}$ in $\LL(T)$, with $T$ as in Example~\ref{ex:extensiongraph}.}
		\label{fig:extensiongraph}
	\end{figure}
	
	\begin{proposition}
		\label{pro:multiclustering}
		If a multiset $W \subset \A^*$ is $\pi$-clustering, then every $w \in W$ is $\pi_w$-clustering, with $\pi_w$ the restriction of $\pi$ to the letters appearing in $w$.
	\end{proposition}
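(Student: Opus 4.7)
The plan is to exploit the ebwt description of clustering for multisets: $W$ being $\pi$-clustering means that when one lists all conjugates of the Lyndon representatives of the words of $W$ in $\omega$-order and reads off their last letters, one obtains $a_{\pi(1)}^{k_1} \cdots a_{\pi(d)}^{k_d}$. I want to show that isolating the contribution of a single word $w \in W$ recovers the Burrows-Wheeler transform of $w$ with a suitably restricted clustering permutation. First I would invoke Propositions~\ref{pro:bwt-conjugates} and~\ref{pro:clusteringprimitive} to reduce to the case where every element of $W$ is a Lyndon word, since both the $\pi$-clustering hypothesis on $W$ and the desired conclusion on each $w$ are invariant under replacing a word by any of its conjugates.

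Next I would fix $w \in W$, set $\A_w \subseteq \A$ equal to the set of letters appearing in $w$, and define $\pi_w$ as the permutation of $\A_w$ whose image sequence is obtained from $(a_{\pi(1)}, \ldots, a_{\pi(d)})$ by striking out letters not in $\A_w$. The central combinatorial observation is that for two distinct conjugates $u, v$ of the same primitive word, one has $u \leq_\omega v$ if and only if $u \leq v$ lexicographically, because both comparisons are decided at the first position of disagreement, which necessarily lies within the first $|w|$ symbols. Consequently the $|w|$ conjugates of $w$ appear among the $\omega$-sorted list of all conjugates used to form $\ebwt{\A}{W}$ in the same relative order as in the lexicographic list used to form $\bwt{\A_w}{w}$, and reading off their last letters realizes $\bwt{\A_w}{w}$ as the subsequence of $\ebwt{\A}{W}$ supported on the positions contributed by $w$.

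It then remains to check that selecting a subset of positions from the clustered string $a_{\pi(1)}^{k_1} \cdots a_{\pi(d)}^{k_d}$ preserves the block structure: each block contains a single letter, so the restricted sequence consists of contiguous (possibly empty) monochromatic blocks appearing in the same relative order, and after discarding the blocks of letters not in $\A_w$ one obtains exactly the block pattern witnessing $\pi_w$-clustering of $w$. I expect the only subtle step to be the $\omega$-order-versus-lex-order equivalence for conjugates of a single primitive word, which is what makes the local $\mathrm{bwt}$ faithfully embedded inside the global $\mathrm{ebwt}$; everything else is bookkeeping on how blockwise clustering behaves under projection to a subset of positions.
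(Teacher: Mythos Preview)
Your argument is correct but takes a genuinely different route from the paper. The paper's proof is a one-liner invoking the dynamical picture: it associates to the $\pi$-clustering multiset $W$ the corresponding DIET, observes that each $w \in W$ is (up to conjugacy) the coding of a single periodic orbit of that DIET, and then appeals to the general fact (from~\cite{FerencziZamboni13}, recalled in Section~4) that each orbit of a DIET codes a clustering word for the restricted permutation. Your proof, by contrast, stays entirely on the combinatorial side: you show directly that $\bwt{\A_w}{w}$ embeds as a subsequence of $\ebwt{\A}{W}$ by matching the $\omega$-order on all conjugates with the lexicographic order on the conjugates of a single primitive $w$, and then note that any subsequence of a block-clustered word is again block-clustered for the induced permutation. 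This is more elementary and self-contained --- it needs none of the DIET machinery and could be read independently of Sections~3--5 --- whereas the paper's argument is shorter given the surrounding context and reinforces the orbit-decomposition viewpoint that the section is advertising. One minor remark: your appeal to Proposition~\ref{pro:clusteringprimitive} is not really needed, since in the paper's setting $W$ already consists of primitive (indeed Lyndon) words; the reduction you actually use is just closure of $\mathrm{ebwt}$ under conjugation of the members of $W$.
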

	\begin{proof}
		The result easily follows by considering the DIET associated with $W$ and noticing that each word $w \in W$ corresponds to exactly one orbit of the DIET.
	\end{proof}
	
	Note that the opposite of Proposition~\ref{pro:multiclustering} is not true.
	
	\begin{example}
		\label{ex:multinonclustering}
		Let $\A = \{ {\tt a} < {\tt b} \}$.
		The words $w_1 = {\tt ab}, w_2 = {\tt aab}$ are $\pi$-clustering with $\pi = ({\tt ba})$.
		The  multiset $W = \{ w_1, w_2 \}$ is not clustering since
		$\ebwt{\A}{W} = {\tt babaa}$.
		Note also that $\LL(W^\omega) = \LL(w_1^\omega) \cup \LL(w_2^\omega)$ is not ordered dendric as one can easily check by considering $\G(\varepsilon)$, $\G({\tt a})$ and $\G({\tt aba})$ (see Figure~\ref{fig:multinonclustering}).
	\end{example}
	
	\begin{figure}[ht]
		\centering
		\tikzset{node/.style={draw, circle, inner sep=0.4mm}}
		\tikzset{title/.style={minimum size=0.5cm,inner sep=1pt}}
		\begin{tikzpicture}
			\node[title] (Ee) {$\G(\varepsilon)$};
			\node[node](ebl) [below left = 0.2cm and 0.2cm of Ee] {${\tt b}$};
			\node[node](eal) [below = 0.1cm of ebl] {${\tt a}$};
			\node[node](ear) [below right = 0.2cm and 0.2cm of Ee] {${\tt a}$};
			\node[node](ebr) [below = 0.1cm of ear] {${\tt b}$};
			\path
			(ebl) edge[thick] (ear)
			(eal) edge[thick] (ear)
			(eal) edge[thick] (ebr);
			
			\node[title] (Ea)[right = 3cm of Ee] {$\G({\tt a})$};
			\node[node](abl) [below left = 0.2cm and 0.2cm of Ea] {${\tt b}$};
			\node[node](aal) [below = 0.1cm of abl] {${\tt a}$};
			\node[node](aar) [below right = 0.2cm and 0.2cm of Ea] {${\tt a}$};
			\node[node](abr) [below = 0.1cm of aar] {${\tt b}$};
			\path
			(abl) edge[thick] (aar)
			(abl) edge[thick] (abr)
			(aal) edge[thick] (abr);
			
			\node[title] (Eaba)[right = 3cm of Ea] {$\G({\tt aba})$};
			\node[node](ababl) [below left = 0.2cm and 0.1cm of Eaba] {${\tt b}$};
			\node[node](abaal) [below = 0.1cm of ababl] {${\tt a}$};
			\node[node](abaar) [below right = 0.2cm and 0.1cm of Eaba] {${\tt a}$};
			\node[node](ababr) [below = 0.1cm of abaar] {${\tt b}$};
			\path
			(abaal) edge[thick] (abaar)
			(ababl) edge[thick] (ababr);
			
		\end{tikzpicture}
		\caption{Extension graphs of $\varepsilon$, ${\tt a}$ and ${\tt aba}$ as in Example~\ref{ex:multinonclustering}.}
		\label{fig:multinonclustering}
	\end{figure}
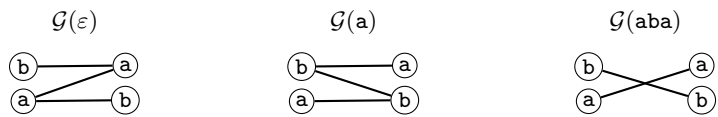

	\section{Concluding Remarks}
	\label{sec:conclusions}
	
	Our approach not only leverages the deep combinatorial structure inherent in Rauzy induction, but also sets the stage for potential generalizations to broader classes of interval exchange transformations in future research.
	
	The question of whether such return words are perfectly clustering when the interval exchange transformation is symmetric remains open. More generally, are return words in languages of IETs associated with a permutation $\pi$ necessarily $\pi$-clustering?
	It is reasonable to anticipate that the techniques developed here could aid in answering these questions in future contributions.

	\addcontentsline{toc}{section}{References}
	\bibliographystyle{plain}
	\bibliography{references.bib}
	
\end{document}